\newtheorem{defin}{Definition}
\newtheorem{exm}{Example}
\newtheorem{thm}{Theorem}
\newtheorem{rem}{Remark}
\newtheorem{claim}{Claim}
\newtheorem{lem}[thm]{Lemma}
\definecolor{rev}{HTML}{1679AB}
\definecolor{reg}{HTML}{000000}
\begin{document}

\title{Efficient Recovery of Sparse Graph Signals from Graph Filter Outputs }

\author{Gal Morgenstern~\IEEEmembership{Student Member,~IEEE},
	and~Tirza~Routtenberg,~\IEEEmembership{Senior Member,~IEEE}  
\thanks{G. Morgenstern and T. Routtenberg are with the School of ECE, Ben-Gurion University of the Negev, Beer Sheva, Israel (e-mail: \{galmo@post.;tirzar\}@bgu.ac.il). 
This work is partially supported by the Israel Science Foundation (grant No. 1148/22), the Jabotinsky Scholarship from the Israel Ministry of Technology and Science, and the Israel Ministry of National Infrastructure and Energy.
Parts of this work (the GM-GIC algorithm) were presented at the IEEE  Computational Advances
in Multi-Sensor Adaptive Processing (CAMSAP) Workshop 2023 \cite{morgenstern2023sparse}. 
} }

\maketitle

\begin{abstract} \label{sec; abstract}
This paper investigates the recovery of a node-domain sparse graph signal from the output of a graph filter.
This problem, which is often referred to as the identification of the source of a diffused sparse graph signal, is seminal in the field of graph signal processing (GSP).
Sparse graph signals can be used in the modeling of a variety of real-world applications in networks, such as social, biological, and power systems, and enable various GSP tasks, such as graph signal reconstruction, 
blind deconvolution, and sampling. 
In this paper, we assume double sparsity of both the graph signal and the graph topology, as well as a low-order graph filter. 
We propose three algorithms
to reconstruct the support set of the input sparse graph signal from the graph filter output samples, leveraging these assumptions and the generalized information criterion (GIC). 
First, we describe the graph multiple GIC (GM-GIC) method, which is based on partitioning the dictionary elements (graph filter matrix columns) that capture information on the signal into smaller subsets.
Then, the local GICs are computed for each subset and aggregated to make a global decision. 
Second, inspired by the well-known branch and bound (BNB) approach, we develop the graph-based branch and bound GIC (graph-BNB-GIC), and
incorporate a new tractable heuristic bound tailored to the graph and graph filter characteristics.
 In addition, we propose the graph-based first order correction (GFOC) method, 
which improves existing sparse recovery methods by iteratively examining 
potential improvements to the GIC cost function by 
replacing elements from the estimated support set with elements from their
one-hop neighborhood.
Simulations on  stochastic block model (SBM) graphs demonstrate that the proposed sparse recovery methods outperform 
existing techniques
 in terms of support set recovery and mean-squared-error (MSE), without significant computational overhead. 
In addition, we investigate the application of our graph-based sparse recovery methods in blind deconvolution scenarios where the graph filter is unknown. Simulations using real-world data from brain networks and pandemic diffusion analysis further demonstrate the superiority of our approach compared to graph blind deconvolution techniques.

\end{abstract}

\begin{IEEEkeywords}
Graph signal processing (GSP), support set recovery, double sparsity, generalized information criterion (GIC), graph signals, graph filters
\end{IEEEkeywords}


\section{Introduction}
\label{sec:introduction}

\Ac{gsp} theory extends classical time-varying signals to irregular domains represented by graphs \cite{ortega2018graph,shuman2013emerging,sandryhaila2013discrete}. 
The graph under consideration may represent a physical network, such as electrical or sensor networks, or a virtual network, such as social networks, or it may encode statistical dependencies among the signal values. 
Recent developments in \ac{gsp}  encompass a wide array of processing tools,
including graph spectral analysis \cite{sandryhaila2014discrete,marques2017stationary}, anomaly detection \cite{drayer2019detection}, sampling and signal recovery \cite{tanaka2020sampling,marques2015sampling,chen2015signal,di2018adaptive,sagi2022gsp,kroizer2022bayesian,amar2023widely}, \color{black}
verification of graph smoothness \cite{dabush2023verifying,venkitaraman2019predicting}, and graph filter design \cite{ortega2018graph,shuman2013emerging,isufi2016autoregressive,coutino2019advances}.
However, limited attention has been paid to developing \ac{gsp}-based techniques for the sparse recovery of node-domain {\em{sparse graph signals}}.

Signal modeling based on sparse representations is used in numerous signal and image processing applications
\cite{Chen_Donoho_2001,Tropp_2004,donoho2005stable,elad2010sparse}. Sparse recovery methods can broadly be categorized into three main groups: convex relaxations, often using methods such as $\ell_1$-norm minimization or basis pursuit \cite{Chen_Donoho_2001,tibshirani1996regression,efron2004least,candes2007dantzig, donoho2009message}, non-convex optimization approaches \cite{wipf2004sparse,gorodnitsky1997sparse, chartrand2008iteratively}, and greedy algorithms \cite{mallat1993matching,Tropp_2004,cai2011orthogonal, dai2009subspace, blumensath2008gradient }. 
In the context of estimation theory, Bayesian inference approaches estimate the sparse signal by incorporating prior information, while other methods use the structure of the dictionary matrix. 
In network science, compressed sensing and group testing have been discussed in \cite{haupt2008compressed,xu2011compressive,cheraghchi2012graph},
and diffusion processes that originate from a sparse signal have been discussed in \cite{pinto2012locating,sefer2016diffusion,zhang2016towards,pena2016source,sridhar2023quickest}.  \label{pp; sampling works} Several \ac{gsp} studies have investigated the sampling and reconstruction of graph signals by assuming the signal belongs to a vector subspace. 
For example, the works  \cite{wang2015local,anis2014towards,chen2015discrete,tsitsvero2016signals} assume graph bandlimited signals, while \cite{mashhadi2017interpolation} considers sparsity in the spectral domain. Similarly, \cite{romero2016kernel} models the graph signals as smooth in the node domain or decaying in the spectral domain. 

Recently there has been growing interest in setups where a node-domain sparse 
graph signal $\signal$ is filtered by a graph filter $\Hmat$. 
In this case, the output signal $\measurements$ is referred to as a diffused sparse graph signal \cite{ramirez2021graph}.  
Modeling diffusion processes over graphs by graph filters is valuable due to the ability to capture local interactions among nodes \cite{segarra2016blind,segarra2017optimal,segarra2015distributed,mei2015signal}.
Diffused sparse graph signals can be used for modeling many real-world network scenarios.
For instance, in social networks, these signals can be used for opinion forming \cite{rainer2002opinion} and finding the origin of a rumor \cite{shah2011rumors}.
In biological networks, they enable the analysis of the spread of a disease \cite{newman2002spread},
and in computer networks, they can be used to locate malware \cite{shah2010detecting}. 
In power systems, they may be used for locating anomalies in power signals
\cite{morgenstern2020structuralconstrained,morgenstern2023protection}. 
In the field of \ac{gsp}, diffused sparse graph signals have been studied 
in several problems, such as bandlimited graph signal reconstruction \cite{segarra2016reconstruction}. 
Blind deconvolution, in which both the graph filter and the sparse input are estimated, has been discussed in \cite{ramirez2021graph,segarra2016blind,zhu2020estimating,ye2018blind,iglesias2018demixing}.
System identification, considered in \cite{segarra2017optimal,ramirez2021graph}, refers to the case where the sparse signal is known, but the graph filter coefficients are estimated. 
The sampling of diffused sparse graph signals is discussed in \cite{rey2019sampling}. 
For an overview of blind deconvolution and signal reconstruction methods for diffused sparse graph signals, we refer the readers to Fig. 1 in \cite{ramirez2021graph}. 
However, in most of the applications discussed above, the sparse recovery does not exploit the \ac{gsp}-based modeling 
of the {\em{sparse graph signals}}. 
Instead, these methods typically rely on standard sparse approximations, such as $\ell_1$ relaxation, or employ tailored solutions that use specific properties of the application at hand.

In this paper, we establish efficient sparse recovery methods for 
graph signals using
the \ac{gic} \cite{stoica2004model} \color{black} as the cost function.
We assume double sparsity of both the graph and the graph signal and that the network diffusion 
process is modeled by a known low-degree graph filter. 
A direct solution for the \ac{gic}-based problem necessitates an exhaustive search, whose  
complexity grows exponentially with the size of the graph,
rendering it impractical. 
We propose three graph-based sparse recovery methods that harvest the double sparsity imposed by the graph and the graph signal. 
The first method, the \ac{gmgic} method,
identifies and partitions a set of suspected nodes and then applies a local \ac{gic} on each partition, which significantly reduces the computational overhead of the \ac{gic}-based approach.
The second method, the \ac{gbnb} method, further mitigates the computational demands by 
iteratively searching over the candidate support sets of the sparse graph signals (the same sets as in the \ac{gmgic} method).
This method is developed based on the \ac{bnb} method \cite{boyd2007branch, lawler1966branch} with a heuristic graph-based upper bound that exploits the double sparsity and the graph properties. 
The third method, the \ac{gfoc} method, builds upon existing sparse recovery techniques and improves them in terms of the GIC cost function by 
replacing elements from the estimated support set with elements from their
one-hop neighborhood.
Our simulations on \ac{sbm} graphs demonstrate that the proposed methods outperform state-of-the-art techniques: \ac{omp}, Lasso, and \ac{bnb} with $\ell_1$ regularization methods in terms of the support set recovery accuracy  and \ac{mse},\color{black} 
without a significant computation overhead. 
 In addition, to address scenarios where the graph filter is unknown, we introduce a new approach that incorporates a preliminary blind deconvolution stage to estimate the graph filter. 
Our simulations demonstrate that the proposed graph-based recovery methods can significantly improve the performance of existing graph blind deconvolution techniques in real-world applications, including: (1)
brain networks obtained from real data 
 \cite{hagmann2008mapping}; and (2) 
 the identification of the source of a disease using data from the $1854$ cholera outbreak  \cite{wilson2012cholera,snow1854report}. 

\label{pp; introduction blind 1}

The paper is organized as follows. 
Section \ref{sec; problem} presents the necessary background and outlines the problem formulation. 
In Section \ref{sec; analysis}, we analyze the dictionary matrix for the proposed graphical setting. 
We describe both the \ac{gmgic} and \ac{gbnb} methods in Section \ref{Sec; theory sparse recovery}, followed by the presentation of the \ac{gfoc} method in Section \ref{sec; sparse; gfoc}.
In Section \ref{sec; simulations}, we present the simulation results, and the conclusions are provided in Section \ref{sec; conclusions}.

{\em{Notations:}} 
In the rest of this paper, vectors and matrices are denoted by boldface lowercase and uppercase letters, respectively. 
The $(\none,\ntwo)$th element of the matrix $\filter$ is denoted by $H_{\none,\ntwo}$. 
The vector ${\signal}_{\signalsupport}$ is a subvector of $\signal$ with the elements indexed by the set $\signalsupport$, 
and $\filter_{\signalsupport}$ is the submatrix of $\filter$ consisting of the {\em{columns}} indexed by $\signalsupport$. In particular, $\filter_{\none}$ denotes the $\none$th column of $\filter$.
The projection matrix into the subspace $\text{col}(\filter_{\signalsupport})$
is given by \be
\label{proj}
\proj{\signalsupport}=\projection{\submatcol{\filter}{\signalsupport}}. 
\ee
The operators $\norm{\cdot}$, $\norm{\cdot}_1$, $\norm{\cdot}_0$, and $\card{\cdot}$ denote the Euclidean norm, 
$\ell_1$ norm, zero semi-norm, and set cardinality. 
The identity matrix is denoted by $\Imat$.


\section{Problem Formulation} \label{sec; problem}

In this section, we formulate the problem of recovering a sparse graph signal from the outputs of a graph filter. 
First, in Subsection \ref{sec; graph}, we present relevant \ac{gsp} background.
Then, in Subsection \ref{sec; assumptions}, we outline the graphical setting 
considered.
Finally, in Subsection \ref{sec; sparse}, we present the recovery problem addressed in this paper.

\subsection{Background: Graphs and Graph Filters } \label{sec; graph}

Consider an undirected weighted graph  $\graph$, which consists of a set of nodes (vertices) $\nodes$, a set of edges $\edges$, 
and a set of weights $\{\edgeweight_{e}\}_{e\in\edges}$. 
A graph signal is a mapping from the graph nodes into a $\card{\nodes}\times 1$ real vector:
\be \label{eq; graph signal}
\signal: \nodes\rightarrow \mathbb{R}^{\card{\nodes}}.
\ee
A \ac{gso}, $\dimsquare{\gso}{\card{\nodes}}$, is a linear operator applied on graph signals that satisfies
\be\label{eq; gso}
  \gsoentry_{\none,\ntwo}=0, ~ ~~\text{if}~\geo{\none}{\ntwo}>1,~~~\forall \none,\ntwo\in\nodes,
\ee
where the geodesic distance between nodes $\none$ and $\ntwo$, $\geo{\none}{\ntwo}$, is the number of edges on the shortest path between
these nodes. 
According to \eqref{eq; gso},
the non-zero elements of $\entrymat{\gso}{\none}{\ntwo}$ are the diagonal elements (i.e. $\none=\ntwo$) or the elements that are associated with an edge (i.e. $\edge{\none}{\ntwo}\in\edges$). 
As a result, by applying a \ac{gso} on a graph signal $\signal$, each element of the shifted signal, $\gso\signal$, is a linear combination of the signal elements in its one-hop neighborhood:
\be \label{eq; gso linear}
[\gso\signal]_{\none}= \sum_{\ntwo\in
\nei{1}{\none}} \gsoentry_{\none,\ntwo} \signalentry_{\ntwo}, \quad \none=1,\ldots,\card{\nodes}, 
\ee
where 
\be
\label{nei_def}
\nei{\Delta}{\none}=\{\ntwo\in\nodes:~0\le \geo{\none}{\ntwo}\le \Delta\},~\forall \none\in\nodes.
\ee
Thus, the transformation by the \ac{gso} can be computed locally at each
node by aggregating the values of the input signal within the one-hop neighborhood of each node.

A shift-invariant graph filter, $\dimsquare{\filter}{\card{\nodes}}$, is defined as a  
polynomial of the \ac{gso}\cite{shuman2013emerging,sandryhaila2013discrete}:
\be \label{eq; graph filter}
\filter=h_0\Imat+h_1\gso+\ldots +h_{\filterdeg}\gso^{\filterdeg},
\ee
where $h_0,\ldots,h_{\filterdeg}$ are the coefficients of the graph filter and $1\leq  \filterdeg\leq \card{\nodes}-1 \color{black} $  is the graph filter order. 
By substituting \eqref{eq; gso} in \eqref{eq; graph filter} it can be verified that the $(\none,\ntwo)$th element of $\filter$
satisfies
\be \label{eq; F prop}
\filterentry_{\none,\ntwo}=\sum_{i=0}^\filterdeg h_i [\gso^i]_{\none,\ntwo}=0,~\text{if}~\geo{\none}{\ntwo}>\filterdeg,~\forall \none,\ntwo\in\nodes.
\ee
Consequently, each element of the filtered signal, $\filter\signal$, is a linear 
combination of the signal elements in its  $\filterdeg$-hop neighborhood.
Hence, the filtered signal satisfies 
\be \label{eq; graph filter linear}
[\filter\signal]_{\none}= \sum_{\ntwo\in\nei{\filterdeg}{\none}} \filterentry_{\none,\ntwo} \signalentry_{\ntwo}, 
\ee
where $\nei{\filterdeg}{\none}$ is defined in \eqref{nei_def}.
Thus, the filtered signal can be computed locally at each
node by aggregating the values of the input signal within the $\filterdeg$-hop neighborhood of each of the nodes.
We conclude with the following lemma.
\begin{lem} \label{lem; F prop}
    Let, $\ntwo$ and $\none$ be two nodes with a geodesic distance larger than $2\Psi$, i.e. $\geo{\none}{\ntwo}>2\filterdeg$. 
    Then, their associated graph filter columns $\filter_{\ntwo}$ and $\filter_{\none}$ are orthogonal, i.e. 
\be \label{eq; F prop 2}
\filter_{\ntwo}^T\filter_{\none} = 0.
\ee
\end{lem}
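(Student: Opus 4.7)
The plan is to expand the inner product entrywise and then argue that every term must vanish, using the locality property \eqref{eq; F prop} together with the triangle inequality for the geodesic distance.

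First, I would write
\be
\filter_{\ntwo}^T\filter_{\none}=\sum_{k\in\nodes}\filterentry_{k,\ntwo}\filterentry_{k,\none},
\ee
so the claim reduces to showing that for every node $k\in\nodes$, at least one of the two factors $\filterentry_{k,\ntwo}$ or $\filterentry_{k,\none}$ is zero. By \eqref{eq; F prop} (applied via the symmetry of $\filter$, which follows from $\filter$ being a polynomial of the symmetric \ac{gso}), $\filterentry_{k,\ntwo}\neq 0$ forces $\geo{k}{\ntwo}\leq \filterdeg$, and $\filterentry_{k,\none}\neq 0$ forces $\geo{k}{\none}\leq \filterdeg$. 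Hence only nodes $k$ lying in $\nei{\filterdeg}{\ntwo}\cap\nei{\filterdeg}{\none}$ can contribute to the sum.

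The second step is to show this intersection is empty under the hypothesis $\geo{\none}{\ntwo}>2\filterdeg$. If some $k$ belonged to both neighborhoods, concatenating a shortest path from $\none$ to $k$ with a shortest path from $k$ to $\ntwo$ would yield a walk from $\none$ to $\ntwo$ of length at most $2\filterdeg$, giving the triangle inequality
\be
\geo{\none}{\ntwo}\leq \geo{\none}{k}+\geo{k}{\ntwo}\leq 2\filterdeg,
\ee
which contradicts the assumption. Therefore every term in the sum vanishes and \eqref{eq; F prop 2} follows.

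There is no real obstacle here; the only subtlety is the implicit use of the symmetry of $\filter$ so that \eqref{eq; F prop} may be read along columns as well as rows, and the fact that the geodesic distance on an unweighted shortest-path metric obeys the triangle inequality, which is standard. I would just make these two points explicit in the write-up.
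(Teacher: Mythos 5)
Your proof is correct and follows essentially the same route as the paper's: expand $\filter_{\ntwo}^T\filter_{\none}$ entrywise, use the locality property \eqref{eq; F prop} to conclude that any nonvanishing term requires a node within geodesic distance $\filterdeg$ of both $\none$ and $\ntwo$, and rule this out via the triangle inequality under the hypothesis $\geo{\none}{\ntwo}>2\filterdeg$. The only difference is that you make the triangle-inequality step and the row/column reading of \eqref{eq; F prop} explicit, whereas the paper leaves them implicit (note that \eqref{eq; F prop} is stated for all index pairs and geodesic distance is symmetric, so the appeal to symmetry of $\filter$ is not actually needed).
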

\begin{proof}
First, we observe that 
\be
\filter_{\ntwo}^T\filter_{\none} = \sum_{\iterone\in\nodes} H_{\ntwo,\iterone}H_{\none,\iterone}.
\ee
Additionally, the assumption that $\geo{\none}{\ntwo}>2\filterdeg$ implies 
that there is no $\iterone\in \nodes$ such that both $\geo{\ntwo}{\iterone}\le\filterdeg$ and  $\geo{\none}{\iterone}\le\filterdeg$.
Hence, from \eqref{eq; F prop}, $\forall\iterone\in\nodes$, at least one of the elements $H_{\ntwo,\iterone}$ and $H_{\none,\iterone}$ is zero. 
Therefore, we obtain \eqref{eq; F prop 2}. 
\end{proof}

\subsection{Assumptions on the Graphical Setting} \label{sec; assumptions}

In this paper, 
we consider the following assumptions on the underlying graph and the graph filter.
\begin{enumerate}[label=\textbf{A.\arabic*}, leftmargin=0.8cm,topsep=2.5pt]
\item \label{A; graph degree} The maximum nodal degree (the maximal number of edges connected to one node) of the graph is considered small, i.e. $d_{max}\ll\card{\nodes}$.
\item \label{A; filter degree} The graph filter is a polynomial of the \ac{gso}
as defined in \eqref{eq; graph filter}. Additionally, the graph filter degree, $\filterdeg$, is small \ac{wrt} the number of nodes, i.e. 
$ \filterdeg\ll\card{\nodes}$.
\end{enumerate}

The meaning of these assumptions is as follows. 
Assumption \ref{A; graph degree} enforces a sparse connectivity pattern on the underlying graph, i.e. {\em{sparse graph}},  where the number of edges is limited by 
$ d_{max}\card{\nodes}$, 
which is far fewer than the maximum possible number of edges ($(1/2)\card{\nodes}(\card{\nodes}-1)$).
Furthermore, this sparse connectivity pattern is enforced uniformly in the sense that 
any subgraph of the graph is also sparse, 
where the restriction in Assumption \ref{A; graph degree}
also applies to the subgraph.   
Assumption \ref{A; filter degree} restricts the number of elements in each $\filterdeg$-hop neighborhood
by imposing a small value for $\filterdeg$. 
The structure of the polynomial graph filter shown in \eqref{eq; graph filter} also
implies that, in general,
the values of the filtered signal, $\filter\signal$, in \eqref{eq; graph filter linear}, 
are more significant for indices associated with nodes closer to the nodes in $\signalsupport_t$.
Here, $\signalsupport_t$ refers to the support set of the sparse signal. 
Therefore, combining Assumptions \ref{A; graph degree}-\ref{A; filter degree} has a direct impact on the filtered signal outputs 
in \eqref{eq; graph filter linear}, which are shown to be linear combinations of 
signal elements in their $\filterdeg$-hop neighborhoods.
Consequently, only a portion of the 
filtered signal outputs hold substantial information on the input signal. 

The graph filter in \eqref{eq; graph filter} under Assumptions \ref{A; graph degree}-\ref{A; filter degree},
 in addition to the sparsity restriction in \eqref{eq; sparsity restriction}, 
can be used to model a variety of \ac{gsp} tasks, including 
 signal reconstruction, blind deconvolution, sampling, and system identification \cite{ramirez2021graph,segarra2016blind,zhu2020estimating,segarra2017optimal,rey2019sampling}.
 Additionally, this framework can be used for modeling real-life applications 
 in social networks \cite{rainer2002opinion,shah2011rumors}, 
 biological networks \cite{newman2002spread},  brain networks \cite{hagmann2008mapping},
 epidemiology, 
 computer networks  \cite{shah2010detecting}, 
 wireless sensor networks \cite{sandryhaila2014discrete}, and power systems \cite{morgenstern2020structuralconstrained,morgenstern2023protection}. 
It is worth noting that in the context of power systems, the average nodal degree is generally very low (usually between 2 to 5) and does not increase with the network size \cite{wang2010generating}. This property is also common in other infrastructure systems, such as water and transportation systems.

\subsection{Recovery of Sparse Graph Signals} \label{sec; sparse}
In this paper, we aim to recover the source of a diffused sparse graph signal \cite{ramirez2021graph} from noisy observations under Assumptions \ref{A; graph degree}-\ref{A; filter degree}.
That is, we are interested in recovering the graph signal $\signal$, defined in \eqref{eq; graph signal}, from the observation vector $\measurements$, where the following measurement model is assumed:
\be \label{eq; model}
\measurements=\filter\signal+\noise.
\ee
In this model, $\dimsquare{\filter}{\card{\nodes}}$ is the graph filter, defined in \eqref{eq; graph filter}, which is a sparse dictionary matrix that 
satisfies Assumptions \ref{A; graph degree}-\ref{A; filter degree}.
The noise is modeled by a $\card{\nodes}\times 1$ zero-mean Gaussian vector, $\noise$, with a known 
covariance matrix, $\noisecov$, i.e. $\noise\sim\normal{\zerovec}{\noisecov}$.
Additionally, the unknown graph signal is assumed to be sparse:
\be \label{eq; sparsity restriction}
\norm{\signal}_0\leq \sparsityterm,
\ee
where $\norm{\cdot}_0$ is the $\ell_0$ semi-norm and the sparsity parameter $\sparsityterm$ indicates the sparsity level.
Hence, there exists a set of nodes, $\signalsupport_t$, referred to as the signal support set  
such that 
\be \label{eq; support set}
\entryvec{\signal}{\none}=0,~\forall \none \in \{\nodes\setminus\signalsupport_t\}.
\ee
That is, only the elements in the support set $\signalsupport_t$ may correspond to non-zero elements of  $\signal$.

Based on \eqref{eq; model}-\eqref{eq; support set}, the sparse recovery problem can be formulated 
as (Eq. (1) in \cite{Tropp_2004})
\be \label{eq; SR 2}
(\hat{\signal},\hat{\signalsupport})=\arg\underset{\card{\signalsupport} \le \sparsityterm} {\min}
\underset{~\dimvec{\signal_{\signalsupport}}{\card{\signalsupport}}}{\min} ~\norm{ \measurements-
\filter_\signalsupport \signal_{\signalsupport}} ^2.
\ee
The inner minimization problem in \eqref{eq; SR 2} is 
a \ac{ls} problem with the solution (see, e.g. p. 225 in \cite{kay1993fundamentals}):
\be \label{eq; WLS}
\hat{\signal}_{\signalsupport}=\arg\underset{\dimvec{\signal_{\signalsupport}}{\card{\signalsupport}}}{\min} ~\norm{ \measurements-
\filter_\signalsupport \signal_{\signalsupport}}
^2=(\filter_{\signalsupport}^T\filter_{\signalsupport})^{-1}\filter^T_{\signalsupport}\measurements.
\ee
By substituting \eqref{eq; WLS} in \eqref{eq; SR 2} 
we obtain the following support set recovery problem:
\be \label{eq; sup}
\hat{\signalsupport}=\arg\underset{\card{\signalsupport} \le \sparsityterm}{\min}\norm{ \measurements-\proj{\signalsupport}\measurements}^2 =\arg\underset{\card{\signalsupport} \le \sparsityterm}{\max}\norm{\proj{\signalsupport}\measurements}^2,
\ee
where $\proj{\signalsupport}$ is the projection matrix onto the suspace $\text{col}(\filter_{\signalsupport})$, defined in \eqref{proj}.
The last equality in \eqref{eq; sup} is obtained by using the projection matrix property (see Theorem 2.22 in \cite{yanai2011projection}): $$\norm{\measurements-\proj{\signalsupport}\measurements}^2= \norm{\measurements}^2-\norm{\proj{\signalsupport}\measurements}^2,$$ removing the constant $\norm{\measurements}^2$, and rearranging the problem. 

The support set recovery problem presented in \eqref{eq; sup} applies to any sparse signal $\signal$ and dictionary matrix $\filter$, and is not restricted to graph-based settings. This formulation is valid as long as $\filter_{\signalsupport}^{T}\filter_{\signalsupport}^{~}$ is a nonsingular matrix. \label{pp; nonsingular} 
Therefore, we assume that for any support set $\Omega$ that satisfies 
$\card{\signalsupport}\le s$, i.e. support that is relevant to our algorithms, 
the matrix $\filter_{\signalsupport}$ has a full column rank.  \color{black}
This problem can be viewed as a 
multiple-hypothesis testing problem, 
where under each hypothesis, the sparse signal has a different support set \cite{babu2023multiple}, which belongs to the following set of candidate sets: 
\be \label{eq; candidate set}
\{\signalsupport\in\nodes:~\lvert \signalsupport\rvert\le s\}.
\ee

A direct solution of the support set recovery in \eqref{eq; sup} tends to overfit by selecting support sets with larger cardinality than the true support set. 
To mitigate this problem, we use the well-known \ac{gic} model selection method,
 which is a generalization of various criteria, such as the \ac{aic}
 and the \ac{mdl}.
 For the considered problem, the \ac{gic} is given by \cite{stoica2004model}
\be \label{eq; gic}
\hat{\signalsupport}=\arg \underset{\card{\signalsupport} \le \sparsityterm}{\max}\norm{\proj{\signalsupport}\measurements}^2-\penaltygic{\signalsupport} .
\ee
Compared to \eqref{eq; sup}, the \ac{gic} in \eqref{eq; gic} includes an additional penalty function component, $\penaltygic{\signalsupport} $, which depends on the support set cardinality. 
This penalty term mitigates the risk of overfitting; a larger penalty will lead to less complex models with a smaller support set, while a smaller penalty will result in a larger support set \cite{stoica2004model}.
The penalty function can be tuned based on historical data, through trial and error,  
or by using model prior information.
After solving \eqref{eq; gic},
the sparse signal recovery solution can be obtained by substituting $\hat{\signalsupport}$ in \eqref{eq; WLS}.

In practice, an exact solution for \eqref{eq; gic} requires an exhaustive search over all $\sum_{\iterone=1}^{\sparsityterm}\binom{\card{\nodes}}{\iterone}$ optional support sets, which is infeasible.
To overcome the unfeasibility of the combinatorial search, we propose using the graphical properties imposed by Assumptions \ref{A; graph degree}-\ref{A; filter degree}, and the {\em{double sparsity}} of the graph and the graph signal.
The analysis for this setting is presented in Section \ref{sec; analysis}, and the 
proposed low-complexity support set recovery methods are developed in Section \ref{Sec; theory sparse recovery} and Section \ref{sec; sparse; gfoc}. 

\section{Dictionary Matrix Analysis} \label{sec; analysis}
In this section, we analyze the dictionary matrix atoms (i.e. graph filter columns) 
based on the underlying graphical structure and Assumptions \ref{A; graph degree}-\ref{A; filter degree}. 
This analysis forms the basis for the proposed sparse recovery methods in Sections
\ref{Sec; theory sparse recovery} and \ref{sec; sparse; gfoc}. 

Taking into account that $x_{\none}=0$ for any $\none\notin \signalsupport_t$, $\none\in\nodes$ (see \eqref{eq; support set}), the filtered signal satisfies
\be \label{eq; filtered signal}
\filter\xvec=\filter_{\signalsupport_t}\signal_{\signalsupport_t}=\sum\limits_{\none\in\signalsupport_t} \filter_{\none}\signalentry_{\none}, 
\ee
where $\filter_{\none}$ is the $\none$th column of $\filter$.
Hence, the filtered signal, $\filter\signal$, can be viewed as a linear combination of dictionary atoms (graph filter columns). 
 Therefore, projecting the filtered signal into the single-column space 
$\text{col}(\filter_{\ntwo})$ results in
\be \label{eq; projected filtered signal}
\proj{\ntwo}\filter\xvec=\sum_{\none\in\signalsupport_t} \proj{\ntwo}\filter_{\none}\signalentry_{\none}, 
\ee
where $\proj{\ntwo}$ (defined in \eqref{proj}) is a linear operator. 
Thus, it can be verified that 
\be \label{eq; projection signal condition}
\proj{\ntwo}\filter\xvec=\zerovec~\text{if}~\proj{\ntwo}\filter_{\none}=\zerovec,~\forall \none\in\signalsupport_t.
\ee

In the following lemma, we show that $\proj{\ntwo}\filter_{\none}=0$ if 
the geodesic distance between $\none$ and $\ntwo$ is larger than $2\filterdeg$, $\geo{\none}{\ntwo}>2\filterdeg$.  

\begin{lem} \label{lem; projection single}
 If two dictionary atoms correspond to nodes $\none$ and $\ntwo$ with a geodesic distance
larger than $2\filterdeg$, i.e. $\geo{\none}{\ntwo}>2\filterdeg$, then 
\be \label{eq; 2nd 0}
\proj{\ntwo}\filter_{\none}
=\projection{\filter_{\ntwo}}\filter_{\none}  = \zerovec,
\ee
where $\proj{\ntwo}$ is
the projection matrix onto the single-column space $\text{col}(\filter_{\ntwo})$.

\end{lem}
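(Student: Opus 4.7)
The plan is to reduce Lemma 2 to Lemma 1 by unpacking the definition of the rank-one projection matrix. The only new ingredient needed beyond Lemma 1 is the explicit formula for the orthogonal projector onto the single-column subspace $\text{col}(\filter_{\ntwo})$.

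First, I would recall from the definition in \eqref{proj} that the projection onto $\text{col}(\filter_{\ntwo})$ can be written explicitly as
\be
\proj{\ntwo} = \filter_{\ntwo}\bigl(\filter_{\ntwo}^T\filter_{\ntwo}\bigr)^{-1}\filter_{\ntwo}^T,
\ee
which is a standard identity for a column matrix of full column rank (the full-column-rank assumption for relevant supports is made on p.~\pageref{pp; nonsingular} of the paper, and in the single-column case it amounts to $\filter_{\ntwo}\neq\zerovec$; the degenerate case $\filter_{\ntwo}=\zerovec$ is trivial since then $\proj{\ntwo}=\mathbf{0}$ and the claim holds vacuously).

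Next, I would right-multiply this expression by $\filter_{\none}$ to obtain
\be
\proj{\ntwo}\filter_{\none} = \filter_{\ntwo}\bigl(\filter_{\ntwo}^T\filter_{\ntwo}\bigr)^{-1}\bigl(\filter_{\ntwo}^T\filter_{\none}\bigr).
\ee
At this point the hypothesis $\geo{\none}{\ntwo}>2\filterdeg$ becomes available, and I would invoke Lemma \ref{lem; F prop} directly to conclude that the scalar $\filter_{\ntwo}^T\filter_{\none}$ vanishes. Substituting this into the display above immediately yields $\proj{\ntwo}\filter_{\none}=\zerovec$, which is \eqref{eq; 2nd 0}.

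There is no real obstacle here: the lemma is essentially a corollary of Lemma 1 combined with the closed form of the single-column projector. The only subtlety worth mentioning in the write-up is the implicit non-degeneracy assumption $\filter_{\ntwo}\neq\zerovec$ needed to invert $\filter_{\ntwo}^T\filter_{\ntwo}$, but this is already covered by the full-column-rank convention adopted earlier in the paper.
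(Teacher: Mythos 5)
Your proposal is correct and follows essentially the same route as the paper: substitute the explicit rank-one projector $\filter_{\ntwo}\bigl(\filter_{\ntwo}^T\filter_{\ntwo}\bigr)^{-1}\filter_{\ntwo}^T$ and invoke Lemma \ref{lem; F prop} to kill the inner product $\filter_{\ntwo}^T\filter_{\none}$. Your added remark on the non-degeneracy of $\filter_{\ntwo}$ is a reasonable extra precision that the paper leaves implicit via its full-column-rank convention.
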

\begin{IEEEproof}
The first equality in \eqref{eq; 2nd 0} is obtained by substituting the definition of the projection matrix $\proj{\ntwo}$.
The second equality is obtained by substituting 
 \eqref{eq; F prop 2} from Lemma \ref{lem; F prop}, since it is given that $\ntwo$ and $\none$ satisfy $\geo{\ntwo}{\none}>2\filterdeg$.
\end{IEEEproof}

Leveraging the results in \eqref{eq; filtered signal}-\eqref{eq; 2nd 0}, 
we present the following definition and theorem.

\begin{defin} \label{def; neighverhood}
The $\filterdeg$-order neighborhood of the set $\signalsupport_t$ is 
\beqna \label{eq; neighborhood}
\nei{2\filterdeg}{\signalsupport_t}\define \bigcup_{\none\in\signalsupport_t} \nei{2\filterdeg}{\none} \hspace{4.35cm}\nonumber\\
=\{\none\in\nodes:~\exists\ntwo\in\signalsupport_t:~ 0<\geo{\none}{\ntwo}\le 2\filterdeg\},
\eeqna
where $\nei{2\filterdeg}{\none}$ is defined 
in \eqref{nei_def}.  
\end{defin}

\begin{thm} \label{thm;outside}
 If \( \ntwo \notin \nei{2\filterdeg}{\none} \), then the following hold:
\begin{subequations}
\begin{align}
&\proj{\ntwo}\filter\signal = \zerovec, \label{eq; filtered 0} \\
&\proj{\ntwo}\measurements = \proj{\ntwo}\filter\signal + \proj{\ntwo}\noise = \proj{\ntwo}\noise, \label{14}
\end{align}
\end{subequations}
where $ \proj{\ntwo}$ is defined in \eqref{proj} and $\measurements$ is the measurement model defined in \eqref{eq; model}. 
\end{thm}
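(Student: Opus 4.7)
The plan is to prove the two equations sequentially, using linearity of the projection operator together with the atom-level orthogonality already established in Lemma \ref{lem; projection single}. The interpretation of the hypothesis is that $\ntwo$ lies outside the $2\filterdeg$-hop neighborhood of the support set $\signalsupport_t$, so by Definition \ref{def; neighverhood} every $\none'\in\signalsupport_t$ satisfies $\geo{\ntwo}{\none'}>2\filterdeg$. This uniform distance condition is precisely what is needed to invoke Lemma \ref{lem; projection single} for each active atom.

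First I would expand the filtered signal using \eqref{eq; filtered signal} as $\filter\signal=\sum_{\none'\in\signalsupport_t}\filter_{\none'}\signalentry_{\none'}$, and then pull the linear operator $\proj{\ntwo}$ inside the sum to obtain $\proj{\ntwo}\filter\signal=\sum_{\none'\in\signalsupport_t}\proj{\ntwo}\filter_{\none'}\signalentry_{\none'}$. Since the hypothesis $\ntwo\notin\nei{2\filterdeg}{\signalsupport_t}$ forces $\geo{\ntwo}{\none'}>2\filterdeg$ for every $\none'\in\signalsupport_t$, Lemma \ref{lem; projection single} gives $\proj{\ntwo}\filter_{\none'}=\zerovec$ for each such $\none'$. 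Term-by-term the sum vanishes, establishing \eqref{eq; filtered 0}.

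For the second assertion I would simply substitute the measurement model \eqref{eq; model} and again use linearity of $\proj{\ntwo}$ to write $\proj{\ntwo}\measurements=\proj{\ntwo}\filter\signal+\proj{\ntwo}\noise$, after which the first term drops by \eqref{eq; filtered 0}, leaving \eqref{14}. There is essentially no obstacle here: the entire argument is a bookkeeping exercise once Lemma \ref{lem; projection single} is available. The only small care point is to verify that "being outside the $2\filterdeg$-neighborhood of the \emph{set}" translates cleanly into the per-element geodesic condition that the lemma requires, which follows directly from Definition \ref{def; neighverhood}.
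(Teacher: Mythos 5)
Your proposal is correct and follows essentially the same route as the paper's proof: expand $\proj{\ntwo}\filter\signal$ via \eqref{eq; projected filtered signal}, kill each term with Lemma \ref{lem; projection single}, and then left-multiply the measurement model by $\proj{\ntwo}$ and substitute \eqref{eq; filtered 0}. Your explicit remark that the hypothesis must be read as $\ntwo\notin\nei{2\filterdeg}{\signalsupport_t}$ (so that the per-element distance condition holds for every active atom) is a useful clarification of a notational looseness in the theorem statement, but it does not change the argument.
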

\begin{IEEEproof}
Since \( \ntwo \notin \nei{2\filterdeg}{\none} \), then Lemma \ref{lem; projection single} implies  that $\proj{\ntwo}\filter\signal$ can be expressed using \eqref{eq; projected filtered signal}, where each of the components in the summation satisfies \(\proj{\ntwo}\filter_{\none}x_{\none}
 = \zerovec\). Thus, we obtain \eqref{eq; filtered 0}. 
 Additionally, by left-multiplying the measurement model in \eqref{eq; model} by \(\proj{\ntwo}\) and then substituting \eqref{eq; filtered 0} in the result, we obtain \eqref{14}.
\end{IEEEproof}

It can be seen that  Theorem \ref{thm;outside} implies that 
the dictionary atoms 
indexed by nodes outside the set $\nei{2\filterdeg}{\signalsupport_t}$ in \eqref{eq; neighborhood}
contain only the contribution of the 
noise. 
That is, the dictionary matrix elements  
that contain information on the sparse graph signal are only those included in the 
$2\filterdeg$ neighborhood of $\signalsupport_t$,  
$\mathcal{N}_{2\filterdeg}[\signalsupport_t]$.
 Thus, in practice, the true support of the signal is included in the set 
\be \label{eq; reduced feasible set}
\{\signalsupport\in\mathcal{N}_{2\filterdeg}[\signalsupport_t]:~\lvert \signalsupport\rvert\le s\}. 
\ee 
Consequently, we can significantly reduce the search set in \eqref{eq; gic}
by using \eqref{eq; reduced feasible set} instead of \eqref{eq; candidate set}. 
This point is formalized in 
the following theorem, which establishes a bound on the cardinality of the set $\nei{2\filterdeg}{\signalsupport_t}$ from \eqref{eq; neighborhood}
as a function of the graph filter degree $\filterdeg$,
the graph structure, and the sparsity level $s$. 
This theorem provides constraints on the size of
a reduced set of dictionary atoms that contain signal information.
In particular, it offers two key advantages: 
first, it reduces the number of candidate support sets used to construct the feasible set in \eqref{eq; gic}; second,   in some cases, 
the node set, $\nodes$,  could be divided into smaller subsets, enabling parallel processing.

\begin{thm}\label{thm; graph degree}
The cardinality of the set $\mathcal{N}_{2\Psi}(\signalsupport)$ is bounded by 
\be \label{eq; bound degree} 
\card{\nei{2\filterdeg}{\signalsupport_t}} \le s\sum_{\iterone=0}^{2\filterdeg} d_{\max}^\iterone,
\ee
where $d_{\max}$ is the maximum degree of the graph.
\end{thm}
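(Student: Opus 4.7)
The plan is to prove Theorem \ref{thm; graph degree} by a two-step reduction: first bound the union of neighborhoods by a sum of individual ball sizes via subadditivity of cardinality, then bound each ball size by exploiting the bounded-degree assumption \ref{A; graph degree}.

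First, I would invoke Definition \ref{def; neighverhood}, which expresses $\nei{2\filterdeg}{\signalsupport_t}$ as the union $\bigcup_{\none\in\signalsupport_t}\nei{2\filterdeg}{\none}$, and apply the elementary inequality $\lvert\bigcup_i A_i\rvert \le \sum_i \lvert A_i\rvert$ to obtain $\card{\nei{2\filterdeg}{\signalsupport_t}} \le \sum_{\none\in\signalsupport_t}\card{\nei{2\filterdeg}{\none}}$. Since the sparsity restriction in \eqref{eq; sparsity restriction} gives $\card{\signalsupport_t}\le s$, it suffices to prove that each single-node ball satisfies $\card{\nei{2\filterdeg}{\none}}\le\sum_{\iterone=0}^{2\filterdeg}d_{\max}^{\iterone}$.

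Second, I would partition $\nei{2\filterdeg}{\none}$ into geodesic shells $\mathcal{S}_{\iterone} = \{\ntwo\in\nodes:\,\geo{\none}{\ntwo}=\iterone\}$ for $\iterone=0,1,\ldots,2\filterdeg$, so that $\card{\nei{2\filterdeg}{\none}}=\sum_{\iterone=0}^{2\filterdeg}\card{\mathcal{S}_{\iterone}}$, and establish by induction on $\iterone$ that $\card{\mathcal{S}_{\iterone}}\le d_{\max}^{\iterone}$. The base case $\iterone=0$ is immediate since $\mathcal{S}_{0}=\{\none\}$. For the inductive step, I would use the fact that any vertex $v$ with $\geo{\none}{v}=\iterone\ge 1$ must be adjacent to at least one vertex at distance $\iterone-1$ from $\none$ (otherwise a shorter path would contradict $\geo{\none}{v}=\iterone$). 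Consequently, $\mathcal{S}_{\iterone}$ is contained in the union of the one-hop neighborhoods of the vertices in $\mathcal{S}_{\iterone-1}$; since each vertex has at most $d_{\max}$ neighbors by Assumption \ref{A; graph degree}, we get $\card{\mathcal{S}_{\iterone}}\le d_{\max}\cdot\card{\mathcal{S}_{\iterone-1}}\le d_{\max}^{\iterone}$.

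Combining the two steps yields $\card{\nei{2\filterdeg}{\signalsupport_t}}\le s\sum_{\iterone=0}^{2\filterdeg}d_{\max}^{\iterone}$, as claimed. The argument is purely combinatorial and poses no real obstacle; the only subtlety is being careful that the shell bound $d_{\max}^{\iterone}$ is not the sharpest possible (the tighter $d_{\max}(d_{\max}-1)^{\iterone-1}$ also holds for $\iterone\ge 1$ since backtracking is forbidden), but the looser geometric form stated in the theorem leads to a cleaner expression and is sufficient for the complexity reductions used in the subsequent sections.
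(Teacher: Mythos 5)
Your proposal is correct and follows essentially the same route as the paper: bound the union by the sum of single-node ball sizes, bound each ball by $\sum_{\iterone=0}^{2\filterdeg}d_{\max}^{\iterone}$ using the bounded degree, and conclude via $\card{\signalsupport_t}\le s$. The only difference is that you make the paper's informal ``each neighbor has up to $d_{\max}$ neighbors, and so on'' rigorous with an explicit induction on geodesic shells, which is a welcome but inessential refinement.
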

\begin{proof}
Since $d_{\max}$ is the maximal degree of the graph, we obtain that each node in the graph has up to $d_{\max}$ neighboring nodes, and each of those neighboring nodes can, in turn, have up to $d_{\max}$ neighbors, and so on. 
As a result, the size of the $2\filterdeg$ neighborhood of a single node $\none$  satisfies \be
\label{k_bound}
\card{\nei{2\filterdeg}{\none}}\le\sum_{\iterone=0}^{2\filterdeg} d_{\max}^\iterone. 
\ee
By summing all the $k$th bounds in the form of \eqref{k_bound} for $k\in \signalsupport_t$,
we obtain
\beqna \label{eq; bound degree proof} 
\card{\nei{2\filterdeg}{\signalsupport_t}} \le \sum_{\none=1}^{\card{\signalsupport_t}} 
\card{\nei{2\filterdeg}{\none}}  \le \sum_{\none=1}^{\card{\signalsupport_t}} \sum_{\iterone=0}^{2\filterdeg} d_{\max}^\iterone.
\eeqna
Therefore, since $\card{\signalsupport_t} \le \sparsityterm$ under the sparsity assumption in \eqref{eq; sparsity restriction}, we conclude that the $2\filterdeg$-neighborhood
of $\signalsupport$ is bounded by the \ac{rhs} of \eqref{eq; bound degree}. 
\end{proof}

 The bound in \eqref{eq; bound degree} depends on the sparsity level, $\sparsityterm$, 
the maximal degree of the graph, $d_{\max}$, and the graph filter degree, $\filterdeg$,
which are all considered to be small based on \eqref{eq; sparsity restriction}, Assumption \ref{A; graph degree}, and Assumption \ref{A; filter degree}, respectively. 
In contrast, the bound is independent of the size of the graph, i.e. the number of nodes, $\card{\nodes}$. 
Thus, for large networks it can be assumed that the bound is significantly lower than the actual number of nodes:
\be \label{eq; bound super small}
\sparsityterm\sum_{\iterone=0}^{2\filterdeg} d_{\max}^\iterone\ll \card{\nodes}. 
\ee
Consequently, by substituting \eqref{eq; bound super small} in \eqref{eq; bound degree} we obtain
\be
\label{new_bound}
\card{\nei{2\filterdeg}{\signalsupport_t}}\ll \card{\nodes}. 
\ee
The significance of the result in \eqref{new_bound} lies in the observation that the number of dictionary atoms containing information about the sparse signal is much smaller than the total number of dictionary atoms that are required for searching for the optimal \ac{gic} in \eqref{eq; gic}. As a result, any sparse recovery method that employs a search algorithm (and not just the \ac{gic} in \eqref{eq; gic}) can significantly reduce its computational complexity by focusing on the reduced set of dictionary atoms associated with the nodes in $\nei{2\filterdeg}{\signalsupport_t}$ instead of on the entire set associated with all nodes in $\nodes$.

Moreover, it should be noted that the bound in \eqref{eq; bound degree} is not tight. 
This is because we may have overlaps between the $\filterdeg$-order neighborhoods of the nodes 
in $\signalsupport_t$ 
and overlaps within the $\filterdeg$-order neighborhoods of individual nodes (see first and second inequalities in \eqref{eq; bound degree proof}, respectively). 
As a result, in practice, the number of elements in $\mathcal{N}_{2\filterdeg}[\signalsupport_t]$ can be considerably lower than its bound 
in \eqref{eq; bound degree}.
This further emphasizes the advantage of reducing computational complexity 
 by searching over $\nei{2\filterdeg}{\signalsupport_t}$ instead of $\nodes$.

Dictionary atoms within the set $\mathcal{N}_{2\Psi}(\signalsupport_t)$ may differ in the significance of the information they provide about the signal. Specifically, some dictionary atoms might convey less relevant information compared to others, with some containing only minor details that are less significant in the context of the signal. 
To remove the less informative dictionary atoms, we define the subset $\mathcal{D}$ as including only those dictionary atoms that offer substantial information about the signal.

\begin{defin} \label{def; D}
Let $\epsilon>0$ be a tuning parameter. Then, the set $\mathcal{D}\subset\mathcal{N}_{2\Psi}(\signalsupport_t)$ is defined such that   
\be \label{eq; D}
\|\Pmat_{\filter_{m}}\filter_{\signalsupport_t}\signal_{\signalsupport_t}\|_2^2<\epsilon,~~~\forall m\in\{\mathcal{N}_{2\Psi}(\signalsupport_t)\setminus\mathcal{D}\}.
\ee 
\end{defin}
The definition of the set $\mathcal{D}$ is determined by the parameter  $\epsilon$, introducing a fundamental trade-off.
A larger value of $\epsilon$ will result in a smaller set $\mathcal{D}$, which may improve the computational efficiency of the proposed sparse recovery algorithms.
However, this can also
 lead to the exclusion of dictionary atoms 
containing significant signal information
 from the set $\mathcal{D}$, thereby degrading the overall recovery performance.
Conversely, selecting a smaller value of $\epsilon$ results in a larger set $\mathcal{D}$, 
potentially reaching the extreme case where $\mathcal{D}=\mathcal{N}_{2\Psi}(\signalsupport_t)$.
In this scenario, the size of the search space 
 and the computational overhead are not effectively reduced. 
In addition, the choice of $\epsilon$ can ensure the existence of a non-empty set $\mathcal{D}$.
Thus, careful tuning of $\epsilon$ is crucial.
One approach is to set $\epsilon$ based on the \ac{snr}, ensuring that atoms with contributions comparable to the noise variance 
are excluded from $\mathcal{D}$.
Another strategy is to leverage prior information, if available, to to guide the selection of  $\epsilon$.

Using the set $\mathcal{D}$ instead of the entire node set $\nodes$ 
for the objective function in \eqref{eq; gic}, i.e. searching over 
\(\{\signalsupport\subset\mathcal{D}:~\card{\signalsupport}\le \sparsityterm\}\),
offers significant advantages in some scenarios, depending on 
the sparsity pattern of the signal.
In these scenarios, 
 reducing the search space may enable the partitioning of 
$\mathcal{D}$ into subsets, \(\{\mathcal{D}_q\}_q\) that are associated with dictionary atoms spanning orthogonal subspaces, as detailed in Section \ref{sec; locality based parition}. 
In this case, the search set can be divided \ac{wrt} the partition
to smaller search sets: \(\left\{\{\signalsupport\subset\mathcal{D}_q:~\card{\signalsupport}\le \sparsityterm\}\right\}_q\).
This strategic partitioning facilitates parallel processing, which enhances the computational efficiency during the search over 
$\mathcal{D}$.

\label{sec; analysis end}
\section{Sparse Recovery of Graph Signals} \label{Sec; theory sparse recovery}
In this section, we propose two methods
for solving the \ac{gic} support set recovery problem in \eqref{eq; gic} 
under the \ac{gsp} measurement model in \eqref{eq; model}. 
The analysis in Section \ref{sec; analysis} provides the basis for these methods, since it enables searching for the optimal support set 
over a reduced set of dictionary atoms. 
In Subsection \ref{sec; gmgic}, we present the \ac{gmgic} method, which involves a node partitioning followed by the implementation of the \ac{gic} locally on each partitioned subset.
In Subsection \ref{sec; sparse; gbgic}, we present the \ac{gbnb} method, which is based on the \ac{bnb} method \cite{boyd2007branch, lawler1966branch},
and involves an upper bound that is derived based on the underlying graphical structure, thus significantly enhancing the traceability of this method.



\subsection{\ac{gmgic}} \label{sec; gmgic}
In this subsection, we outline the rationale and then present the methodology of the \ac{gmgic} method.

\subsubsection{ Node partition} \label{sec; locality based parition}

The analysis in Section \ref{sec; analysis} implies that the true support set of the signal, $\signalsupport_t$, 
is included in the set $\candidateset$ from \eqref{eq; D}.
 We define the following partition on the set $\mathcal{D}$, $\{\mathcal{D}_q\}_{q=1}^Q$, such that 
\be \label{eq; partition}
d_{\mathcal{G}}(k,m)>2\Psi,~\forall k\in\mathcal{D}_q, ~~~\forall m\in\mathcal{D}_{q'},
\ee
for any two subsets $\mathcal{D}_q$ and $\mathcal{D}_{q'}$  from $\{\mathcal{D}_q\}_{q=1}^Q$, where $q\ne q'$ and $1\leq Q \leq s$. The partial support set in the $q$th area, denoted as $\signalsupport_q$, is defined as $\signalsupport_q=\{\signalsupport_t\cap\mathcal{D}_q\}$. 
It should be noted that $\{\signalsupport_q\}_{q=1}^Q$ is a {\em{partition}} of  
the true support set $\signalsupport_t$, since there is no overlap between these subsets and $\signalsupport_t=\cup_{q=1}^Q \signalsupport_q$.

Based on \eqref{eq; F prop 2} in Lemma \ref{lem; F prop}, the partition in \eqref{eq; partition} implies that
any two dictionary atoms associated with two elements from two different subsets are orthogonal, i.e.
\be \label{eq; partition orthogonal}
\filter_{\ntwo}^T\filter_{\none} = 0,~\forall k\in\mathcal{D}_q, ~~~\forall m\in\mathcal{D}_{q'}, 
\ee
for any $q'$ and $q$ where $q\ne q'$. 
Hence, by definition, the column spaces associated with two different subsets are orthogonal, i.e. 
$\text{col}(\filter_{\mathcal{D}_{q}})\perp \text{col}(\filter_{\mathcal{D}_{q'}})$. 
Therefore, the projection of $\filter_{\signalsupport_{q}}\xvec_{\signalsupport_q}$, where $\signalsupport_q=\{\signalsupport_t\cap\mathcal{D}_q\}$, onto $\text{col}(\filter_{\mathcal{D}_{q'}})$ satisfies
\be \label{eq; no affect}
\proj{\mathcal{D}_{q'}}\filter_{\signalsupport_{q}}\xvec_{\signalsupport_q}=\zerovec,
\ee
for any $q'$ and $q$ where $q\ne q'$.
This implies that the partial support set $\signalsupport_q \subset \mathcal{D}_q$
does not affect the measurements in other subsets $\mathcal{D}_{q'}$,  $q'=1,\ldots,Q, q'\neq  q$. 
Thus, instead of searching for the true support set, $\signalsupport_t$, over the entire set $\mathcal{D}$, we aim to find the partial 
support sets, $\{\signalsupport_q\}_{q=1}^Q$, by searching across the smaller subsets $\{\mathcal{D}_q\}_{q=1}^Q$.  
In  Algorithm \ref{alg; partition}
presents an algorithm to partition $\candidateset$
to connected sets that satisfy
 \eqref{eq; partition}.

\begin{center}
\begin{algorithm}[hbt]
\SetAlgoLined
\caption{Node Partitioning}
\label{alg; partition} 
\SetKwInput{Input}{Input}
\SetKwInput{Output}{Output}
\Input{  Graph $\graph$, node subset $\candidateset$}
\BlankLine
Set $\tilde{\nodes}=\candidateset$~~~~ ~~~\% {\em{new node set}}\\
Set ~~~~~~~~~~~~~~~~ \% {\em{new edge set}}
\be
\label{tilde_edge}
\tilde{\edges}= \{(\none,\ntwo):~\none,\ntwo\in\tilde{\nodes},~ 1\le \geo{\none}{\ntwo} \le 2\filterdeg\}
\ee 
Generate a partition of  $\tilde{\mathcal{G}}(\tilde{\nodes}, \tilde{\edges})$ with connected components (e.g. using {\em{conncomp}} in Matlab)\\
\textbf{Return} \textit{subsets $\dsubsets$, where $\dsubset{q}$ includes the nodes in the $q$th connected component obtained from the graph partition.}
\end{algorithm}
\end{center}
\begin{claim} \label{claim; partition}
The subsets $\dsubsets$ obtained from Algorithm \ref{alg; partition} satisfy 
\eqref{eq; partition}.  
\end{claim}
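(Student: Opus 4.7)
The plan is to prove Claim \ref{claim; partition} by a direct contradiction argument that exploits the construction in Algorithm \ref{alg; partition}. Observe that by the algorithm, the subsets $\{\mathcal{D}_q\}_{q=1}^Q$ are precisely the vertex sets of the connected components of the auxiliary graph $\tilde{\mathcal{G}}(\tilde{\mathcal{V}},\tilde{\mathcal{E}})$, where $\tilde{\mathcal{V}}=\mathcal{D}$ and $\tilde{\mathcal{E}}$ is defined in \eqref{tilde_edge}. Hence membership of two nodes in different subsets $\mathcal{D}_q$ and $\mathcal{D}_{q'}$ (with $q\ne q'$) is equivalent to the statement that they lie in different connected components of $\tilde{\mathcal{G}}$, and in particular there is no edge between them in $\tilde{\mathcal{E}}$.

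The core step is then to translate the absence of an edge in $\tilde{\mathcal{E}}$ into the required geodesic separation in the original graph $\mathcal{G}$. Fix arbitrary $k\in\mathcal{D}_q$ and $m\in\mathcal{D}_{q'}$ with $q\ne q'$. Suppose, towards a contradiction, that $d_{\mathcal{G}}(k,m)\le 2\Psi$. Since $\mathcal{D}_q$ and $\mathcal{D}_{q'}$ are disjoint we have $k\ne m$, hence $d_{\mathcal{G}}(k,m)\ge 1$. Combining these two inequalities with the definition \eqref{tilde_edge} immediately places the pair $(k,m)$ in $\tilde{\mathcal{E}}$. But then $k$ and $m$ would belong to the same connected component of $\tilde{\mathcal{G}}$, contradicting $q\ne q'$. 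Therefore $d_{\mathcal{G}}(k,m)>2\Psi$, which is exactly \eqref{eq; partition}.

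Honestly, this claim is a bookkeeping statement that follows almost immediately from unwinding the definitions of $\tilde{\mathcal{E}}$ and of a connected component, so there is no genuine obstacle to overcome; the only minor subtlety worth flagging explicitly is the exclusion of the degenerate case $k=m$, which is handled by the disjointness of distinct components. Consequently, the only items to verify in the write-up are that Algorithm \ref{alg; partition} indeed returns the connected components of $\tilde{\mathcal{G}}$ (by construction) and that the edge set in \eqref{tilde_edge} uses the original graph distance $d_{\mathcal{G}}$ with the threshold $2\Psi$, which matches the target condition in \eqref{eq; partition} exactly.
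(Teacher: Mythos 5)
Your proof is correct and follows essentially the same contradiction argument as the paper: assume $\geo{\none}{\ntwo}\le 2\filterdeg$ for nodes in different subsets, deduce $(\none,\ntwo)\in\tilde{\edges}$, and contradict the fact that edges only connect nodes within the same connected component. Your explicit handling of the degenerate case $k=m$ (to secure the lower bound $d_{\mathcal{G}}(k,m)\ge 1$ in \eqref{tilde_edge}) is a small point the paper leaves implicit, but otherwise the two arguments coincide.
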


\begin{IEEEproof}
Let $\dsubsets$ be the subsets obtained from Algorithm \ref{alg; partition}, and let $\none$ and $\ntwo$ be selected from different subsets. For the sake of contradiction, assume that $\geo{\none}{\ntwo}\le 2\filterdeg$. 
Since $\none$ and $\ntwo$ satisfy $\geo{\none}{\ntwo}\le 2\filterdeg$, they must be connected by one of the edges in the set $\tilde{\edges}$, i.e. $(\none,\ntwo) \in \tilde{\edges}$, where $\tilde{\edges}$ is defined in \eqref{tilde_edge}. 
However, this contradicts our assumption since $(\none,\ntwo)\in\tilde{\edges}$ only if $\none$ and $\ntwo$ are in the same connected component $\dsubset{q}$. Therefore, we can conclude that when $\none$ and $\ntwo$ are in different subsets from $\dsubsets$, then the geodesic distance between them satisfies $\geo{\none}{\ntwo}>2\filterdeg$. 
Correspondingly, the subsets obtained from Algorithm \ref{alg; partition} satisfy \eqref{eq; partition}. 
\end{IEEEproof}

The node partitioning presented in this subsection is used in the \ac{gmgic} method in Subsection \ref{sec; gmgic method}
in order to reduce the amount of required computations without compromising the estimation performance. 
 When partitioning is not possible (i.e. when $Q=1$),  \ac{gmgic} method still offers an advantage over the \ac{gic} method by performing an exhaustive search over $\mathcal{D}$, which is a smaller subset of the overall node set $\mathcal{V}$.
Moreover, when $Q>1$, partitioning the subset $Q$ into smaller subsets $\{\mathcal{D}_q\}_q$ enables the exhaustive search to be conducted on these smaller subsets, 
as illustrated in Fig. \ref{fig; gmgic}.

\subsubsection{Methodology} \label{sec; gmgic method}

The \ac{gmgic} method consists of four steps.
In the {\bf{first step}}, based on \eqref{14},
we estimate the set $\mathcal{D}$ described in \eqref{eq; D} by applying the \ac{glrt} for each $m\in\mathcal{V}$ on the following binary hypothesis testing problem:
\be
\label{H1_H0}
\begin{cases}
    \mathcal{H}_0:~ \proj{\ntwo}\measurements=\proj{\ntwo}\noise \\    \mathcal{H}_1:~\proj{\ntwo}\measurements=\proj{\ntwo}\filter\signal +\proj{\ntwo}\noise.
\end{cases}
\ee
This hypothesis testing
can be understood as the task of detecting a signal embedded in Gaussian noise.
In this case, the signal is projected onto a smaller subspace, $\proj{\ntwo}\measurements$, under both hypotheses.
Under $\mathcal{H}_1$, the signal is presumed to be present in the subspace, whereas in $\mathcal{H}_0$ 
the subspace encompasses solely the noise.

The \ac{glrt} for the hypothesis testing in \eqref{H1_H0} is defined for each $\ntwo\in\nodes$ as (see, e.g. p. 201 in \cite{kayfundamentals})
\be
\label{GLRT}
\lvert \lvert \proj{\ntwo}   \yvec \rvert\rvert^2 \mathop{\gtrless}_{\Hnull}^{\mathcal{H}_1} \zeta,
\ee
where $\zeta$ is a user-defined threshold parameter.  This threshold is closely related to the parameter $\epsilon$ from the definition of the set $\mathcal{D}$ in \eqref{eq; D}.
However, while $\epsilon$ is associated with the filtered signal $\filter\signal$ (see in \eqref{eq; D}),  
$\zeta$ is applied to the measurement vector $\yvec$ (see in \eqref{GLRT}) and, thus, also accounts for the measurement noise. 
\color{black}
Thus, the set $\candidateset$ is estimated according to the binary detector in \eqref{GLRT},
which is defined for each $\ntwo\in\nodes$, by 
\be  \label{eq; pre-screening}
\hat{\candidateset}\define\{\ntwo\in\mathcal{V}:~\| \proj{\ntwo}  \measurements \|^2  >\zeta\}.
\ee

In the {\bf{second step}}, we use the node partition algorithm in Algorithm \ref{alg; partition} with the input $\hat{\mathcal{D}}$ from \eqref{eq; pre-screening} to obtain the subsets $\hdsubsets$.
In the {\bf{third step}}, we solve the support set recovery problem in \eqref{eq; gic} for each subset $\hat{\mathcal{D}}_q$ separately. 
That is, we estimate 
the partial support set by
\be \label{eq; gic q}
	\hat{\signalsupport}_q = \arg\underset{\{\lvert \signalsupport \rvert \le \sparsityterm,\signalsupport\subseteq \hat{\mathcal{D}}_q\}} {\max} ~\|\proj{\signalsupport}\measurements\|^2-\penaltygic{\signalsupport} ,~~~q=1,\ldots,Q.
\ee
It is noted that for any $q=1,\ldots,Q$, the partial support set, $\hat{\signalsupport}_q$, may be smaller than or equal to the subset from which it is recovered, $\hat{\mathcal{D}}_q$.
The total support set of the sparse vector is then computed as the union of  the partial estimates obtained from \eqref{eq; gic q}, as follows:
\be \label{eq; joint recovered support}
\hat{\signalsupport}^{temp}=\bigcup\nolimits_{q=1}^{Q}\hat{\signalsupport}_q.
\ee
It should be noted that in \eqref{eq; gic q} 
we impose the condition that the size of the support set should not exceed the sparsity parameter, $\lvert \signalsupport \rvert \le \sparsityterm$, for each subset separately.  As a result, we can ensure that the cardinality of $\hat{\signalsupport}_q$ is always less than or equal to $\sparsityterm$ for any $q=1,\ldots,Q$. However, the same restriction does not apply to the union set in \eqref{eq; joint recovered support}, which means that the cardinality of $\hat{\signalsupport}^{temp}$ 
may be greater than $\sparsityterm$, and thus, it may not be in the feasible set of the original optimization problem in \eqref{eq; gic}. In order to address this issue,
in the {\bf{fourth step}}, sparsity-level correction is performed by solving the recovery in \eqref{eq; sup}
for 
$\{ \signalsupport\subset\hat{\signalsupport}^{temp}: \card{\signalsupport}\le \sparsityterm\}$.
That is, we compute the final estimator of the support set by
\be \label{eq; gic sparsity correction}
	\hat{\signalsupport} = \arg\underset{\{\lvert \signalsupport \rvert \le \sparsityterm,\signalsupport\subseteq \hat{\signalsupport}^{temp}\}} {\max} ~\|\proj{\signalsupport}\measurements\|^2-\penaltygic{\signalsupport} . 
\ee
The \ac{gmgic} method is summarized in Algorithm \ref{alg; gmgic} and illustrated in Fig. \ref{fig; gmgic}.
\begin{algorithm}[hbt]
\SetAlgoLined
\caption{GM-GIC}
\label{alg; gmgic}
\SetKwInput{Input}{Input}
\SetKwInput{Output}{Output}
\Input{Measurements $\measurements$, graph filter $\filter$, filter degree $\filterdeg$, 
and sparsity parameter $\sparsityterm$}
\BlankLine
\underline{Step 1:} Compute $\hat{\candidateset}$  from \eqref{eq; pre-screening}\\
\underline{Step 2:} Partition $\hat{\candidateset}$ by Algorithm \ref{alg; partition} to obtain
$\hdsubsets$\ 
\\
 \nonl \underline{Step 3:}
  \\
\For{$q=1$ \KwTo $Q$}{
 Obtain $\hat{\signalsupport}_q$ by solving \eqref{eq; gic q} \
 }
Compute: $\hat{\signalsupport}^{temp}=\bigcup\limits_{q=1}^{Q}\hat{\signalsupport}_q$ \
\\
\nonl\underline{Step 4:}
\\
 Obtain $\hat{\signalsupport}$ by solving \eqref{eq; gic sparsity correction}. 

\textbf{Return} \textit{recovered support: $\hat{\signalsupport}$}
\end{algorithm}

\begin{figure}[hbt]
\centering 
 \includegraphics[trim={0cm 9cm 0 4.8cm},clip,width=7cm]{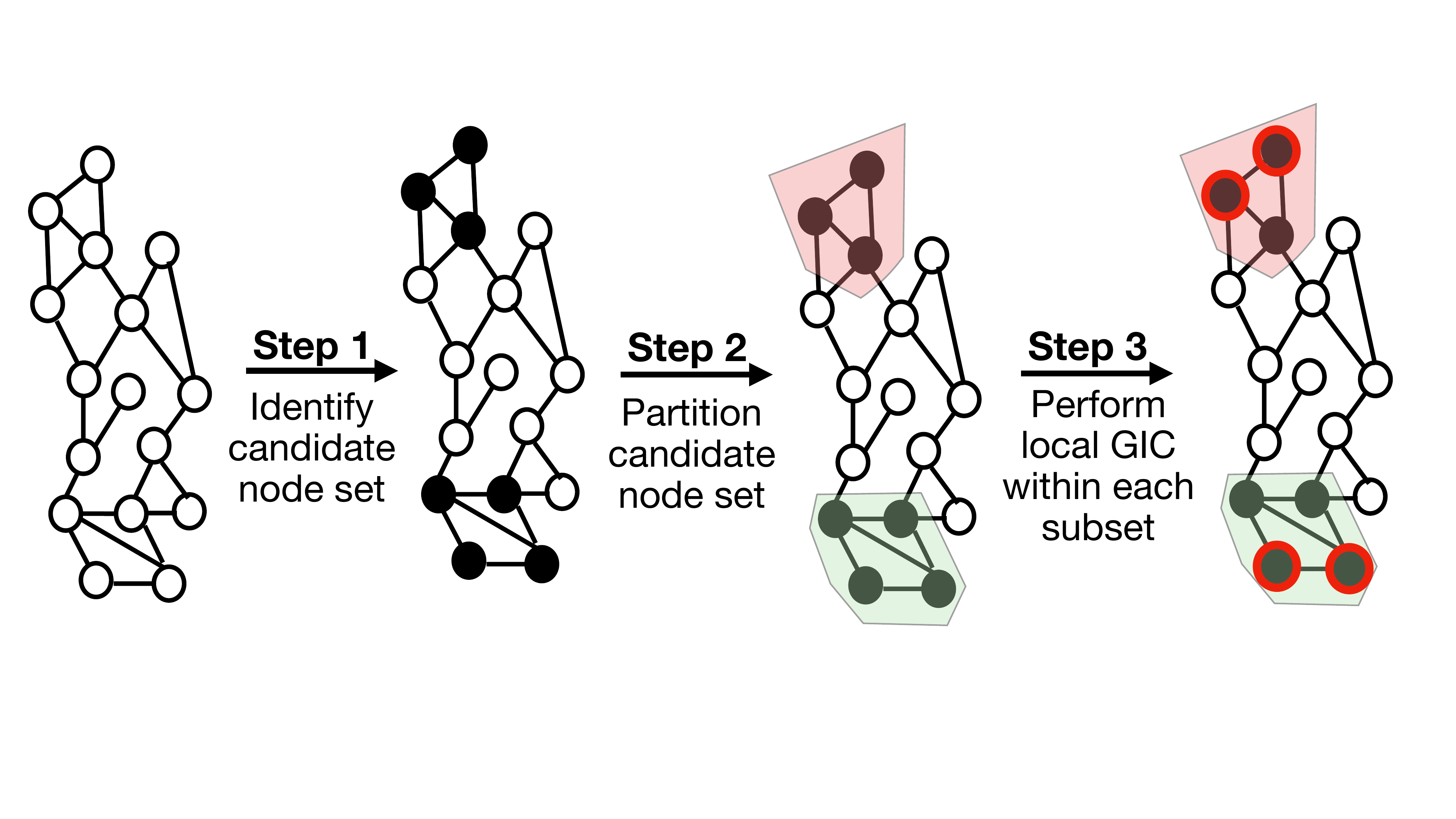}
\caption{ Steps 1-3 of the \ac{gmgic} method from  Algorithm \ref{alg; gmgic}.
In Step $1$, the algorithm estimates the candidate node set,  $\candidateset$,  denoted by black-filled nodes.
A node $\none$ is included in 
 $\candidateset$ if the projection of the measurements into the  single-column space $\text{col}(\filter_{\none})$, $\lvert \lvert \proj{\none} \yvec \rvert\rvert^2$, exceeds a certain threshold. 
In Step $2$, the nodes in $\candidateset$ are partitioned into subsets (here, 2 subsets in pink and green) based on the geodesic distance between the nodes, the graph filter order, and the graph structure. 
Finally, in Step $3$, a local \ac{gic} is performed for the support set recovery within the subsets.  
This results in separate partial support sets (denoted by red rings around the nodes).}
\label{fig; gmgic}
\end{figure}


\subsection{\ac{gbnb}} \label{sec; sparse; gbgic}
In this subsection 
 we present a new \ac{gbnb} method 
for solving the \ac{gic} problem in \eqref{eq; gic}. To this end, 
we reformulate the \ac{gic} problem in \eqref{eq; gic} as an integer programming optimization problem, 
and then derive the \ac{gbnb} method. 
The \ac{gbnb}, is a \ac{bnb}-based method (see, e.g. \cite{boyd2007branch,lawler1966branch}) 
with novel graph-based designs for the branching and bounding steps. 

\subsubsection{New formulation}
By using the $\card{\nodes}\times 1$ Boolean vector $\boovec=[\boo{1},\ldots,\boo{\card{\nodes}}]^T$, 
we address the \ac{gic} problem in \eqref{eq; gic} as the following integer programming optimization problem:
\beqna \label{eq; gic boolean}
\hat{\boovec}=\arg\underset{\boovec\in \{0,1\}^{\card{\nodes}} }{\max}\norm{\proj{\signalsupport(\boovec)}\measurements}^2-\penaltygic{\signalsupport(\boovec)} \nonumber\\
 \text{such that} ~
\card{\signalsupport(\boovec)}\le \sparsityterm,\hspace{2cm}
\eeqna
where $\signalsupport(\boovec)$ is the support set of $\zvec$, i.e. the 
set of indices corresponding to non-zero elements of  $\boovec$ that are in this case equal to 1.
It can be verified that the problem in \eqref{eq; gic boolean} is equivalent to the problem in \eqref{eq; gic} (provides the same solution).
In the new formulation, each node $\none$ can be fixed to either $z_{\none}=0$ or $z_{\none}=1$. 
In this way we can impose additional information on the specific node, 
and determine whether it is part of the support set 
or not. 
The added flexibility is essential for the {\em{branch step}}, which is described next.

\subsubsection{Branch} \label{sec; branch}
Without loss of generality, we assume that the graph node indices are ordered according to their single-node projected energy, i.e. 
\be \label{eq; ordering}
\norm{\proj{1}\measurements}_2^2\ge \norm{\proj{2}\measurements}_2^2\ge \ldots \ge \norm{\proj{\card{\nodes}}\measurements}_2^2.   
\ee
According to the discussion around \eqref{eq; D}, nodes included in the true support set of the signal $\signal$, $\signalsupport_t$, 
are associated with smaller indices in \eqref{eq; ordering}. 
Moreover, dictionary atoms associated with nodes indexed larger than $\card{\candidateset}$, \ac{wrt} to the ordering in \eqref{eq; ordering}, 
are not expected to contain substantial information on the filtered signal, $\filter\signal$. 
Based on the ordering in \eqref{eq; ordering}, we describe the {\em{branch step}} as follows. 
By selecting index $1$, we form the following two problems: 
the first problem is 
\beqna \label{eq; gic boolean 0}
\hat{\boovec}=\arg\underset{\boovec\in \{0,1\}^{\card{\nodes}} }{\max}\norm{\proj{\signalsupport(\boovec)}\measurements}^2-\penaltygic{\signalsupport(\boovec)} \nonumber\\
 \text{such that} \begin{cases}
\card{\signalsupport(\boovec)}\le \sparsityterm \\
\boo{1}=0,
\end{cases}\hspace{2cm}
\eeqna
and the second problem is
\beqna \label{eq; gic boolean 1}
\hat{\boovec}=\arg\underset{\boovec\in \{0,1\}^{\card{\nodes}} }{\max}\norm{\proj{\signalsupport(\boovec)}\measurements}^2-\penaltygic{\signalsupport(\boovec)} \\
 \text{such that} \begin{cases}
\card{\signalsupport(\boovec)}\le \sparsityterm \\
\boo{1}=1.
\end{cases}\hspace{2cm}
\eeqna
In other words, we fix the value of $\boo{1}$ to $0$ in the first subproblem, which means that the first node is not part of $\signalsupport(\boovec)$.
In the second subproblem, we set the value of $\boo{1}$ to be $1$, which means that the first node is part of $\signalsupport(\boovec)$. 
The problems in \eqref{eq; gic boolean 0} and \eqref{eq; gic boolean 1} are subproblems of the original problem in \eqref{eq; gic boolean} since they have the same cost function and constraints as the original problem 
with one additional variable fixed. 
Each of these subproblems is an integer programming optimization
problem, with $\card{\nodes}-1$ Boolean variables. 
The optimal value of the original problem is
the maximum between the optimal values of the two subproblems.

\definecolor{bnbone}{rgb}{0.71,0.89,0.7}
\definecolor{bnbtwo}{rgb}{1,0.64,0.28}

The division of \eqref{eq; gic boolean} to the two subproblems in \eqref{eq; gic boolean 0}-\eqref{eq; gic boolean 1} 
is the first {\em{branch step}} performed in our algorithm. 
The second {\em{branch step}} 
is obtained by splitting one of the above subproblems, once with $\boo{2}=0$ and once with $\boo{2}=1$,
following the order in \eqref{eq; ordering}. 
By continuing in this manner, we form a partial binary tree of subproblems (see our illustration in Fig. \ref{fig; gbnb tree}). 
The root is the original problem depicted in \eqref{eq; gic boolean}. 
This problem is split into the two child subproblems in \eqref{eq; gic boolean 0} and \eqref{eq; gic boolean 1}. 
The second iteration yields another two children of one of the original children.
In general, the subproblems can be written as
\beqna \label{eq; gic boolean case}
\hat{\boovec}=\arg\underset{\boovec\in \{0,1\}^{\card{\nodes}} }{\max}\norm{\proj{\signalsupport(\boovec)}\measurements}^2-\penaltygic{\signalsupport(\boovec)} \nonumber\\
 \text{such that} \begin{cases}
\card{\signalsupport(\boovec)}\le \sparsityterm \\
\boo{\none}=0,~\none\in\mathcal{S}_0 \\
\boo{\none}=1,~ \none\in\mathcal{S}_1,
\end{cases}\hspace{1.25cm}
\eeqna
where $\mathcal{S}_0 $ is the set of Boolean variables fixed to $1$ and $\mathcal{S}_1$ is the set of Boolean variables fixed to $0$.
It can be seen that in the optimization problem in \eqref{eq; gic boolean case}, the equality constraints determine 
$\card{\mathcal{S}_0}+\card{\mathcal{S}_1}$ from the elements of $\zvec$. 
Hence, each subproblem could be represented by the notation $(\mathcal{S}_0,\mathcal{S}_1)$, e.g. $(\emptyset,\emptyset)$,
$(\{1\},\emptyset)$, and $(\emptyset,\{1\})$ represent \eqref{eq; gic boolean}, \eqref{eq; gic boolean 0}, and \eqref{eq; gic boolean 1}, respectively.
Additionally, splitting \eqref{eq; gic boolean case} into two subproblems is achieved by adding node $\card{\mathcal{S}_0}+\card{\mathcal{S}_1}+1$ once to the set $\mathcal{S}_0$
and once to the set $\mathcal{S}_1$. 
Hence, the index used to split a leaf node is determined based on the ordering in \eqref{eq; ordering}, where a leaf node refers to a node in the partial binary tree that does not have children nodes. 
Figure \ref{fig; gbnb tree} illustrates two iterations of the {\em{branch step}}.

\begin{figure}[hbt]
\centering
\definecolor{mycolor}{rgb}{0,0,0} 
\begin{tikzpicture}[level/.style={sibling distance=30mm/#1}, ball/.style={circle, shading=ball, ball color=mycolor,draw=mycolor, fill=mycolor, minimum size=2mm},
line width=1.2pt,scale=0.8]
\node[ball] (root) {} 
  child {
    node[ball] (leftChild) {} 
    edge from parent node[left] {$z_1=0$} 
  }
  child {
    node[ball] (rightChild) {} 
    child {
      node[ball] (rightleftChild) {} 
      edge from parent node[left] {$z_2=0$} 
    }
    child {
      node[ball] (rightrightChild){} 
      edge from parent node[right] {$z_2=1$} 
    }
    edge from parent node[right] {$z_1=1$} 
  };

\node[left, xshift=-1mm, yshift=1mm] at (root) {$\{{\color{bnbone}\emptyset},{\color{bnbtwo}\emptyset\}}$}; 
\node[left, xshift=-1mm, yshift=1mm] at (leftChild) {$\{{\color{bnbone}\{1\}},{\color{bnbtwo}\emptyset}\}$}; 
\node[right, xshift=1mm, yshift=1mm]at (rightChild) {$\{{\color{bnbone}\emptyset},{\color{bnbtwo}\{1\}}\}$}; 
\node[left, xshift=-1mm, yshift=1mm]at (rightleftChild) {$\{{\color{bnbone}\{2\}},{\color{bnbtwo}\{1\}}\}$}; 
\node[right, xshift=1mm, yshift=1mm]at (rightrightChild) {$\{{\color{bnbone}\emptyset},{\color{bnbtwo}\{1,2\}}\}$}; 

\end{tikzpicture}

\caption{Illustration of two iterations of the {\em{branch step}}. 
Each node (black circle) corresponds to an optimization problem in \eqref{eq; gic boolean case}, and is identified by the couple 
$\{{\color{bnbone}\mathcal{S}_0},{\color{bnbtwo}\mathcal{S}_1}\}$, that are assumed to be ${\color{bnbtwo}{\text{included in}}}$/$\color{bnbone}{\text{excluded from}}$ the support set, respectively.  }

\label{fig; gbnb tree}
\end{figure}

\subsubsection{Bound}

When applied to large models, standard \ac{bnb} search algorithms (see e.g. \cite{boyd2007branch}) that use theoretical upper and lower bounds may take substantial amounts of computing time for mixed Boolean convex problems.
Using a heuristic bound
to replace the upper or lower bound might achieve a significant speedup in the \ac{bnb} search procedure.
For example, \cite{dionne1979exact,solanki1998using} showed, for the optimal network design problem, that using a heuristic bound may dramatically reduce the computational complexity with a minor effect on the performance. 
Therefore, based on the analysis in Subsection \ref{sec; analysis}, we propose a heuristic upper bound 
alongside a theoretical lower bound for \eqref{eq; gic boolean case}. 

Based on the analysis in Subsection \ref{sec; analysis}, which stems from Assumptions \ref{A; graph degree}-\ref{A; filter degree},  
we propose the following heuristic upper bound on the optimal objective of \eqref{eq; gic boolean case}:
\beqna \label{eq; bnb upper}
\bound_{ub}(\mathcal{S}_0,\mathcal {S}_1)=\norm{\proj{\mathcal{S}_1}\measurements}^2-\penaltygic{\mathcal{S}_1}\hspace{2.5cm}\nonumber \\
+(s-\card{\mathcal{S}_1})\max\left(\norm{\proj{\none+1}\measurements}^2-\rho(1),0\right), 
\eeqna
where $\none=\card{\mathcal{S}_0}+\card{\mathcal{S}_1}$.
The heuristic bound states that the maximum cost function in \eqref{eq; gic boolean case}
is bounded by the superposition of:
\begin{enumerate}
    \item  The cost function associated with the elements in $\zvec$ that are already fixed as $1$ (the elements in $\mathcal{S}_1$):
$\norm{\proj{\mathcal{S}_1}\measurements}^2-\penaltygic{\mathcal{S}_1}$;
\item The cost function associated with the highest individual contributor within the elements not yet determined, i.e. node $\none+1$ where $\none=\card{\mathcal{S}_0}+\card{\mathcal{S}_1}$, multiplied by the number
of elements that can be added to $\mathcal{S}_1$ without exceeding the sparsity limit:
$(s-\card{\mathcal{S}_1})\max\left(\norm{\proj{\none+1}\measurements}^2-\rho(1),0\right)$.
\end{enumerate}

The lower bound on the optimal solution of \eqref{eq; gic boolean case} is obtained by taking the cost function of one of the elements in the feasible set. 
Specifically, we consider the element $\boovec$ 
in which the undetermined elements
(i.e., the elements in 
$\{\card{\nodes}\setminus\{\mathcal{S}_0\}\cup\mathcal{S}_1\}\}$) 
are set to zero. 
All the elements in $\zvec$ are determined for this case, and the non-zero elements correspond to the set $\mathcal{S}_1$.
Thus, the lower bound for \eqref{eq; gic boolean case} is set to 
\be \label{eq; bnb lower}
\bound_{lb}(\mathcal {S}_1)=\norm{\proj{\mathcal{S}_1}\measurements}^2-\penaltygic{\mathcal{S}_1}.
\ee
Hence, the lower bound for \eqref{eq; gic boolean case} is equivalent to the cost function of the support set $\signalsupport=\mathcal{S}_1$. 
It is noted that the lower bound used is a valid bound that is always lower than the optimal value of the solution of \eqref{eq; gic boolean case}. 
This can be verified by observing the fact that  $\signalsupport(\zvec)=\mathcal{S}_1$ is in the feasible set of the maximization problem in \eqref{eq; gic boolean case}, 
and thus its cost would always be lower than or equal to the optimal cost function. 

\subsubsection{Remarks} We conclude the discussion on the bounds with the following two remarks
 that demonstrate the relationships between the lower and upper bounds in special cases. 
\begin{rem}
By substituting \eqref{eq; bnb lower} in \eqref{eq; bnb upper}
we obtain
\beqna \label{eq; bnb upper 2}
\bound_{ub}(\mathcal{S}_0,\mathcal{S}_1)=\bound_{lb}(\mathcal {S}_1)\hspace{3.75cm}\nonumber\\
+(s-\card{\mathcal{S}_1})\max\left(\norm{\proj{\none+1}\measurements}^2-\rho,0\right).
\eeqna
Thus, if $\card{\mathcal{S}_1}=\sparsityterm$, then the lower bound and upper bound are equal, i.e., $\bound_{ub}(\mathcal{S}_0,\mathcal{S}_1)=\bound_{lb}(\mathcal{S}_1)$.
The rationale behind this result is that in this case, the subproblem already reaches the maximal support set cardinality, $\sparsityterm$, 
and, thus, the associated leaf node in the binary tree should not be further divided. 
\end{rem}
\begin{rem} \label{rem; D}
Due to the ordering in \eqref{eq; ordering}, it can be verified that if $k>\card{\candidateset}$, 
then $k\in \{\mathcal{V}\setminus \candidateset\}$. 
Hence, from \eqref{eq; D} we obtain that $\norm{\proj{k}\Hmat_{\signalsupport_t}\signal_{\signalsupport_t}}< \epsilon$. 
Thus, if the noise variance is negligible compared to the filtered signal, \color{black}
it can be assumed that
\be
\label{assumption_remark}\norm{\proj{\none+1}\measurements}^2-\rho<\epsilon,
\ee
where $\rho$  is the \ac{gic} penalty parameter in \eqref{eq; gic}. 
By substituting the result from \eqref{assumption_remark} in \eqref{eq; bnb upper 2}, we obtain that
the difference between the upper and lower bound in \eqref{eq; bnb upper 2} when
$k>\card{\candidateset}$ satisfies
$$
\bound_{ub}(\mathcal{S}_0,\mathcal{S}_1)-\bound_{lb}(\mathcal{S}_1)< (\sparsityterm-\card{\mathcal{S}_1})\epsilon  <\sparsityterm\epsilon,
$$
where the sparsity parameter $\sparsityterm$ indicates the sparsity level of the signal, $\signal$, and $\epsilon$ is defined in \eqref{eq; D}.  
This result implies that observing support set elements with nodes outside $\candidateset$ cannot improve the support set recovery results. 
Consequently, the result shows that in the worst-case scenario of the algorithm, 
it terminates after observing all support set combinations within the set $\mathcal{D}$ from \eqref{eq; D}.
Hence, in its worst case, the algorithm requires an order of $O\left(\sum_{\iterone=1}^{\sparsityterm}\binom{|\mathcal{D}|}{\iterone}\right)$ flops (see Subsection \ref{sec; complexity} for a comparison with the \ac{gic} and \ac{gmgic} methods). 

\end{rem}

\subsubsection{Algorithm}

The algorithm for implementing the \ac{gbnb} method is initialized by the set $\mathcal{L}^{(0)}$ containing only the root node, which is characterized by the sets $(\mathcal{S}_0, \mathcal{S}_1)=(\emptyset,\emptyset)$. 
At each iteration, branching occurs from the leaf node with the smallest depth, where 
if there are multiple leaves with the smallest depth, then the leaf with the largest upper bound is selected. 
That is, the leaf node is selected by 
\beqna \label{eq; bnb branch}
(\hat{\mathcal{S}}_0,\hat{\mathcal{S}}_1)=\arg\underset{(\mathcal{S}_0,\mathcal{S}_1)\in\mathcal{L}^{(t)}}{\max} 
  ~\bound_{ub}(\mathcal{S}_0,\mathcal{S}_1), \hspace{2.5cm} \nonumber \\
   \text{such that}~ \card{\mathcal{S}_0}+\card{\mathcal{S}_1}=\underset{(\tilde{\mathcal{S}}_0,\tilde{\mathcal{S}}_1)\in\mathcal{L}^{(t)}}{\min } \card{\tilde{\mathcal{S}}_0}+\card{\tilde{\mathcal{S}}_1}.
  \eeqna
The branching is performed by the {\em{branch step}} outlined in Subsection \ref{sec; branch}. 
Hence, to generate $\mathcal{L}^{(t+1)}$, all elements in $\mathcal{L}^{(t)}$ are duplicated excluding the selected leaf node. 
The sets of the selected node, 
$(\mathcal{S}_0,\mathcal{S}_1)$, are then split into two child nodes: $(\{\mathcal{S}_0\cup \none\},\mathcal{S}_1)\}$
and $(\mathcal{S}_0,\{\mathcal{S}_1\cup \none\})$, where $\none=\card{\mathcal{S}_0}+\card{\mathcal{S}_1}$.
The algorithm terminates when $L^{(t+1)}= U^{(t+1)}$, where $U^{(t+1)}$ and $L^{(t+1)}$ denote the maximum upper bound and maximum lower bound across all leaf nodes (nodes in $\mathcal{L}^{(t+1)}$), respectively. 
In this case, the maximum upper bound and maximum lower bound belong to the same leaf,
and the set $\mathcal{S}_1$, which is associated with this leaf, represents the estimated support set (i.e. the algorithm output).  
Additionally, if the upper bound of a leaf is smaller than the global lower bound, $L^{(t+1)}$, then it is pruned (removed from the set $\mathcal{L}^{(t+1)}$) since the cost function associated with any of its potential descendants is assumed lower than its upper bound
and that the global lower bound represents a cost function of a feasible solution. 
The proposed \ac{gbnb} method is summarized in Algorithm \ref{alg; gbnb}. 

\begin{algorithm}[hbt]
\SetAlgoLined
\caption{\ac{gbnb}}
\label{alg; gbnb}
\SetKwInput{Input}{Input}
\SetKwInput{Output}{Output}
\Input{Measurements $\measurements$, graph filter $\filter$, and 
sparsity parameter $\sparsityterm$}

\BlankLine
Index $\nodes$ by \eqref{eq; ordering} \\\vspace{0.1cm}
\textbf{Initialize:} \\\vspace{0.1cm}
$t=0$, $\mathcal{L}^{(0)}=\{(\emptyset,\emptyset)\}$\\\vspace{0.1cm}
Compute: $L^{(0)}=\bound_{lb}(\emptyset)$, $U^{(0)}=\bound_{ub}(\emptyset, \emptyset)$\\\vspace{0.1cm}
\While{$L^{(t)}<U^{(t)}$}{
    Pick $(\mathcal{S}_0,\mathcal{S}_1)$ by \eqref{eq; bnb branch}\\
   Set $\none=\card{\mathcal{S}_0}+\card{\mathcal{S}_1}$, $\mathcal{L}^{(t+1)}=\mathcal{L}^{(t)}$ \\
   \nonl \% Branching $(\mathcal{S}_0,\mathcal{S}_1)$: (lines 8-10)\\
   Set $\mathcal{L}^{(t+1)}=\{\mathcal{L}^{(t+1)}\setminus (\mathcal{S}_0,\mathcal{S}_1)\}$ \\
   Set $\mathcal{L}^{(t+1)}=\{\mathcal{L}^{(t+1)} \cup (\{\mathcal{S}_0\cup \none\},\mathcal{S}_1)\}$\\
   Set $\mathcal{L}^{(t+1)}=\{\mathcal{L}^{(t+1)} \cup (\mathcal{S}_0,\{\mathcal{S}_1\cup \none\})\}$\\
   Compute: $L^{(t+1)}=\arg\underset{(\mathcal{S}_0,\mathcal{S}_1)\in\mathcal{L}^{(t+1)}}{\max} ~\bound_{lb}(\mathcal{S}_1)$\\\vspace{0.1cm}
    Compute: $U^{(t+1)}=\arg\underset{(\mathcal{S}_0,\mathcal{S}_1)\in\mathcal{L}^{(t+1)}}{\max} ~\bound_{ub}(\mathcal{S}_0,\mathcal{S}_1)$\\\vspace{0.1cm}
    Pruning: remove $(\mathcal{S}_0,\mathcal{S}_1)\in\mathcal{L}^{(t+1)}$ if it satisfies $\bound_{ub}(\mathcal{S}_0,\mathcal{S}_1)<L^{(t+1)}$\\
   Update: $t=t+1$
}
\textbf{Return}\textit{ recovered support set $\hat{\signalsupport}=\card{\mathcal{S}_1}$ such that $(\mathcal{S}_0,\mathcal{S}_1)\in\mathcal{L}^{(t)}$ and $\bound_{lb}(\mathcal{S}_1)=L^{(t)}$ }
\end{algorithm}

\subsection{Computational Complexity} \label{sec; complexity}

The main goal of this work is to develop graph-based low-complexity methods 
for support set recovery of sparse graph signals. 
Specifically, we aim to design low-complexity methods that approximate the solution of the
\ac{gic} optimization in \eqref{eq; gic}, since a direct solution requires an exhaustive search with computational complexity that grows 
exponentially with the size of the graph. 
In particular, the exhaustive search requires $\sum_{\iterone=1}^{\sparsityterm}\binom{\card{\nodes}}{\iterone}$ floating-point operations (flops), as can be seen in \eqref{eq; gic}.
The computational complexity of the \ac{gmgic} method from Algorithm \ref{alg; gmgic} is primarily determined by the computations of the local \ac{gic} on the largest subset, and thus requires  $O\left(\sum_{\iterone=1}^{\sparsityterm} \binom{ \max_q|\mathcal{D}_q|}{\iterone}\right)$ flops.  In particular, when the set $\mathcal{D}$ cannot be partitioned, the complexity becomes $O\left(\sum_{\iterone=1}^{\sparsityterm} \binom{\mathcal{D}}{\iterone}\right)$. \color{black}
Hence, its complexity is significantly lower than those of the \ac{gic} methods, 
where $\forall q: \mathcal{D}_q\subseteq \mathcal{D}$ and from \eqref{eq; D} and Theorem \ref{thm; graph degree}, $\card{\mathcal{D}}\ll\card{\nodes}$.
The \ac{gbnb} method from Algorithm \ref{alg; gbnb} is anticipated to have an even lower computational burden compared with the \ac{gmgic} method, due to its iterative nature. 
However, as shown in Remark \ref{rem; D}, in its worst-case scenario
it requires an order of $O\left(\sum_{\iterone=1}^{\sparsityterm}\binom{|\mathcal{D}|}{\iterone}\right)$ flops.  
Consequently, even in the worst case, the computational complexity of the \ac{gbnb} method is significantly lower than that of the \ac{gic} method. 
This analysis highlights that the main advantage of the proposed methods lies in their reliance on the size of the set 
$\mathcal{D}$ for complexity considerations, rather than the size of the entire node set, $\mathcal{D}$, thereby enabling efficient computation even in large-scale applications.


\section{\ac{gfoc}} \label{sec; sparse; gfoc}

In this section, we present the \ac{gfoc} method, which
 is a new method for correcting existing estimated support sets of the sparse graph signal, $\hat{\signalsupport}$.
 This correction is obtained by optionally replacing 
 the support set elements with nodes from their one-hop neighborhood,  
 $\firstnei{\hat{\signalsupport}}$. 
The initial estimated support set, $\hat{\signalsupport}$, can be obtained by any sparse recovery method, including standard techniques
(such as the \ac{omp} \cite{cai2011orthogonal} and Lasso \cite{tibshirani1996regression} methods), as well as the \ac{gmgic} and \ac{gbnb} methods proposed in  Section \ref{Sec; theory sparse recovery}.\color{black} 

\subsection{Method}
The proposed \ac{gfoc} method receives as an input a support set estimate, $\hat{\signalsupport}$, provided by any sparse recovery algorithm, along with the system measurements and the information regarding the underlying graph. 
Then, the \ac{gfoc} method goes over the estimated support set elements.
For each element (i.e. each node $k\in\hat{\signalsupport}$), it examines whether the \ac{gic} increases when the element is replaced with one of the nodes in the one-hop open neighborhood set 
\be \label {eq; open neighborhood}
\firstopennei{\none}\define \{\nei{1}{\none}\setminus\none\}=\{\ntwo\in\nodes: \geo{\none}{\ntwo}=1 \},
\ee
where $\nei{1}{\none}$ is defined in \eqref{nei_def} (with $\Delta=1$).
If the \ac{gic} increases, then the method propagates by replacing the $k$th support set element with its neighbor, i.e. the element in \eqref{eq; open neighborhood} that maximizes the \ac{gic}. Otherwise, the $k$th element stays in the estimated support.
It is noted that if the \ac{gfoc} method input is infeasible, i.e. $\card{\hat{\signalsupport}}>s$,
then the input can be corrected by selecting the indices associated with the $s$ largest values in $\hat{\signal}$ where $\hat{\signal}=\psudo{\submatcol{\filter}{\hat{\signalsupport}}}$.
The \ac{gfoc} method is summarized in Algorithm \ref{alg; gfoc}.
In addition, the following example
illustrates the process of the \ac{gfoc} method for a $9$-node graph.

\begin{exm} \label{exm; gfoc}
Fig. \ref{fig; gfoc} illustrates the following scenario: the \ac{gfoc} method receives as input a set of measurements $\measurements$, a graph filter $\filter$, and an estimated support set $\hat{\signalsupport}=\{3,7\}$. 
The \ac{gfoc} method starts by examining the first-order neighborhood of node $3$ (the orange area). 
Specifically, it finds $k$ such that $\{k,7\}$ is the optimal support set that maximizes the \ac{gic}
cost function in \eqref{eq; gic} within the sets: $\{3,7\},$ $\{1,7\},$ $\{4,7\},$ and $\{6,7\}$. 
Similarly, it examines the first-order neighborhood of node $7$ (the green area). 
Hence, it finds $m$ such that $\{k,m\}$ is the optimal support set that maximizes the \ac{gic}
cost function in \eqref{eq; gic} within the sets $\{\none,7\}$, $\{\none,4\},$ $\{\none,5\},$ and $\{\none,9\}$. 
\end{exm}

\begin{figure}[h!]
\centering 
 \includegraphics[trim={20.5cm 3.6cm 22.8cm 5.2cm},clip,width=2.5cm]{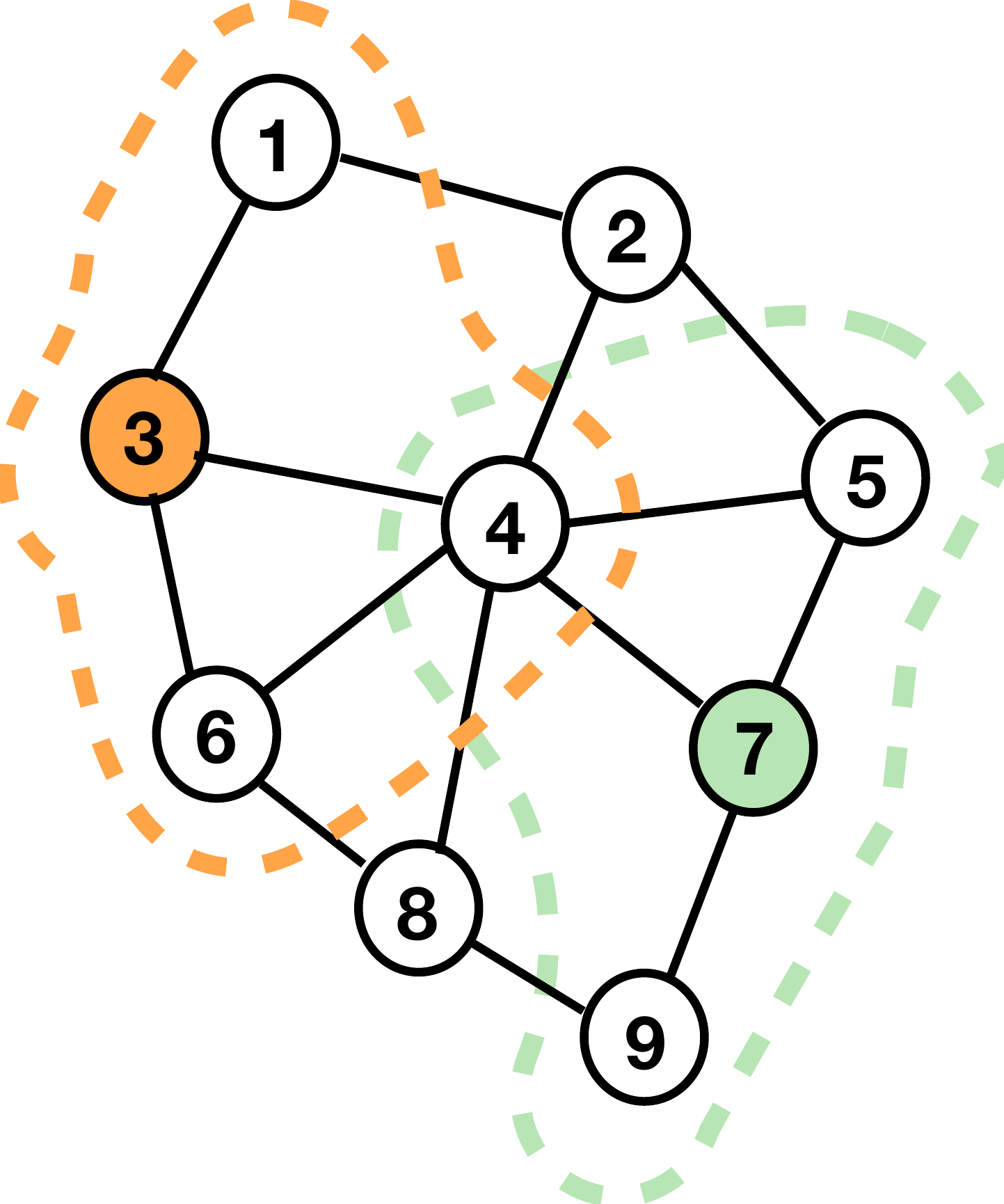}
\caption{ $9$-node graph illustration: the one-hop neighborhoods of nodes $3$ and $7$,
include the nodes inside the orange and green areas, respectively.}
\label{fig; gfoc} 
\end{figure}

\newcommand{\obj}{{obj}}
\begin{algorithm}[hbt]
\SetAlgoLined
\caption{\ac{gfoc}}
\label{alg; gfoc}
\SetKwInput{Input}{Input}
\SetKwInput{Output}{Output}
\Input{Measurements $\measurements$, estimated support set $\hat{\signalsupport}$, graph $\graph$, and graph filter $\filter$ }

Define $\obj(\signalsupport)=\norm{\proj{\signalsupport}\measurements}^2-\penaltygic{\signalsupport} $ \\
Set $\hat{\signalsupport}^c=\hat{\signalsupport}$\\ 
\For{$\none\in \hat{\signalsupport}$}{
    Compute $\hat{\ntwo}=\arg\optmax{\ntwo\in\firstopennei{\none}} \obj(\{\hat{\signalsupport}^c\setminus\none\}\cup\ntwo)$ \\
    \If{$\obj(\{\hat{\signalsupport}^c\setminus\none\}\cup\hat{\ntwo})>\obj(\hat{\signalsupport}^c)$ }
    {
        $\signalsupport^c=\{\{\signalsupport^c\setminus \none\}\cup \hat{\ntwo}\} $
    }    
}
\textbf{Return} \textit{corrected support: $\signalsupport^c$}
\end{algorithm}

\subsection{Discussion} \label{sec; gfoc discussion}

The \ac{gfoc} method is derived based on the \ac{gic} in \eqref{eq; gic} and the analysis in Section \ref{sec; analysis}, derived following Assumptions \ref{A; graph degree}-\ref{A; filter degree}. 
Specifically, based on the discussion around \eqref{eq; D}, it can be inferred that when estimating the support set, if some errors occur, they are likely to be caused by the accidental inclusion of nodes in close proximity to the elements in the true signal support set $\signalsupport_t$. These mistakenly included nodes may be in addition to, or instead of, some of the elements in $\signalsupport_t$.
In response to this possible error, Algorithm \ref{alg; gfoc} proposes examining the immediate neighborhood of elements within the estimated support set to potentially enhance solutions \ac{wrt} the \ac{gic}. 
It is noted that using the \ac{gfoc} method may only improve the performance compared to the method's input \ac{wrt} to the \ac{gic}. 
Moreover, it can be verified from Algorithm \ref{alg; gfoc} that the number of $\gic$ computations required is limited by $\sparsityterm d_{max}$, 
where $d_{max}$ represents the maximum degree of the graph. 
Now, due to Assumption \ref{A; graph degree}, every node in the graph is connected to a small number of neighbors. Hence, we can assume that $d_{\max}$
is relatively low, i.e. $d_{\max}\ll\card{\nodes}$.
Thus, the computational complexity of this approach is expected to be low.
It should be noted that the support of the 
output graph signal obtained by the \ac{gfoc} method can be smaller than the input support set,
as a single neighbor may replace two elements.
The \ac{gfoc} method can be easily extended to the graph $\Psi$ order correction  
by replacing 
\eqref{eq; open neighborhood} with
\be \label {eq; open neighborhood psi}
\{\nei{\Psi}{\none}\setminus\none\}=\{\ntwo\in\nodes: 0<\geo{\none}{\ntwo}\le \Psi \}. 
\ee


\section{Simulations} \label{sec; simulations}

In this section, we evaluate the performance of the proposed support set recovery methods 
developed in Section \ref{Sec; theory sparse recovery} and Section \ref{sec; sparse; gfoc}, and compare them with existing methods through numerical simulations.
The compared methods and the evaluation measures are presented in Subsection \ref{sec; sim; methods and measures}.
We conduct three types of experiments to demonstrate the performance 
of our sparse recovery methods:
1) on
synthetic data over random \ac{sbm} graphs with known graph filters in Subsection \ref{sec; sim; syntetic start}; 2) for blind identification of graph signals over brain networks in Subsection \ref{sec; brain start}; and 3) for recovering the source of the $1854$ Cholera outbreak in London in Subsection \ref{sec; cholera start}.

\subsection{Methods and Performance Measures} \label{sec; sim; methods and measures}
The following support set recovery methods are compared:
\begin{itemize}
\item \label{M; omp} The \ac{omp} method \cite{cai2011orthogonal};
\item \label{M; lasso} The Lasso method \cite{tibshirani1996regression},  which is implemented for diffused sparse graph signals in \cite{ramirez2021graph}; 
\item \label{M; L1BNB} The \ac{bnb} applied on \eqref{eq; SR 2} with the lower and upper bounds suggested in \cite{boyd2007branch} for 
Boolean convex problems (denoted as $\ell_1$-\ac{bnb}); 
\item \label{M; gnmgic} The \ac{gmgic} method in Algorithm \ref{alg; gmgic};
\item \label{M; gbnb} The \ac{gbnb} method in Algorithm \ref{alg; gbnb} (denoted as \gbnb);
\item \label{M; gic} An exhaustive search over the feasible set in \eqref{eq; gic} that involves comparing the cost functions
of all the optional support sets (denoted as \ac{gic}); 
\item The \ac{gfoc} support set correction method in Algorithm \ref{alg; gfoc}, with the estimated support set input from each of the above mentioned methods;
The corrected outputs are denoted as 
1) $\overline{\ac{omp}}^{c}$,
2) $\overline{\text{Lasso}}^{c}$,
3) $\overline{\ell_1-BnB}^{c}$,
4) $\overline{\ac{gmgic}}^{c}$,
5) $\overline{\text{\gbnb}}^{c}$,
6) $\overline{\ac{gic}}^{c}$.
\end{itemize}
In practice, the Lasso and $\ell_1$-\ac{bnb} methods directly solve the sparse recovery problem in \eqref{eq; SR 2}.
As shown in Section \ref{sec; problem}, this is equivalent to solving \eqref{eq; sup}. 
The tuning factors for the different methods are stated as follows. 
The Lasso and $\ell_1$-\ac{bnb} regularization parameter is set to $0.01$. 
Additionally, the \ac{gic} penalty function, $\penaltygic{\signalsupport}$, used for the \ac{gmgic}, \ac{gic}, \gbnb, \ac{gic}, and \ac{gfoc} methods, is chosen to be the \ac{aic} penalty function \cite{stoica2004model},  which demonstrated efficacy across a variety of settings we evaluated.
Hence,  we set 
$\penaltygic{\signalsupport}=2\card{\signalsupport}$.
For the \ac{gmgic} method, the pre-screening threshold $\zeta$ in \eqref{GLRT} is set to $\sigma_n$.
For each presented scenario, the performance has been evaluated by at least $10^3$ Monte Carlo simulations.

The following performance measures are used in the following to evaluate the different methods:
\begin{itemize}
\item The support set recovery is
measured by the F-score classification metric \cite{sokolova}.
The F-score compares between the 
the true support set, $\signalsupport_t$, and the estimated support, $\hat{\signalsupport}$, which is
obtained by the different methods above.
The F-score is given by
\begin{equation} \label{eq; F score}
FS(\signalsupport_t,\hat{\signalsupport})=\frac{2t_p}{2t_p+f_n+f_p},
\end{equation}
where $t_p\define\card{\signalsupport_t\cap\hat{\signalsupport}}$ is the number of true-positives, 
 $f_p\define\card{\{\nodes\setminus\signalsupport_t\}\cap\hat{\signalsupport}}$ is the number of false-positives, 
 and $f_n\define\card{\signalsupport_t\cap\{\nodes\setminus\hat{\signalsupport}\}}$  is the number of false-negatives.
The F-score is between $0$ and $1$, where $1$ indicates perfect support set recovery, i.e. $\hat{\signalsupport}=\signalsupport$.
\item  The sparse signal recovery accuracy is measured using the \ac{mse} between the true graph signal, $\signal$, and the estimated sparse graph signal, $\hat{\signal}$:
$
MSE(\signal, \hat{\signal}) = \left\| \signal - \hat{\signal} \right\|_2^2,
$
where both \(\signal\) and \(\hat{\signal}\) are normalized. 
In practice, the sparse signal is recovered by substituting \(\hat{\signalsupport}\) in \eqref{eq; WLS}. 

\item The computational complexity is evaluated by the average run-time of the algorithms, where the simulations have been conducted  
using Matlab on two Intel(R) Xeon(R) CPU E5-2660 v4 @ 2.00 GHz processors (Subsection \ref{sec; sim; syntetic start})  or using Matlab on the Apple M1 pro chip (Subsections \ref{sec; brain start} and \ref{sec; cholera start}).
\end{itemize}

\begin{figure*}[t]
    \centering
    \begin{subfigure}[b]{0.3\textwidth}
        \centering
        \includegraphics[width=0.7\textwidth]{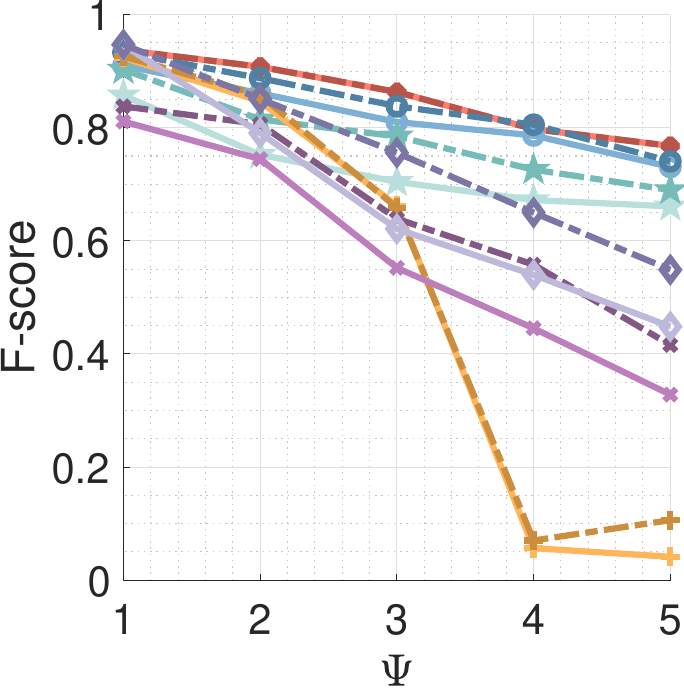}
        \begin{picture}(0,0)
            \put(-125,110){\textbf{(a)}}
        \end{picture}
        \label{fig: fscore deg}
    \end{subfigure}%
    \begin{subfigure}[b]{0.3\textwidth}
        \centering
        \includegraphics[width=0.7\textwidth]{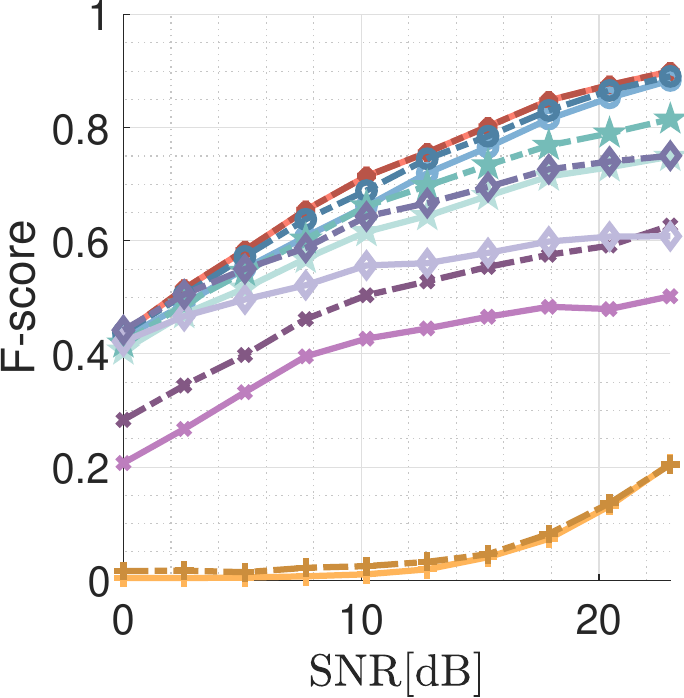}
        \begin{picture}(0,0)
            \put(-125,110){\textbf{(b)}}
        \end{picture}
        \label{fig: fscore snr}
    \end{subfigure}%
    \begin{subfigure}[b]{0.3\textwidth}
        \centering
        \includegraphics[width=0.7\textwidth]{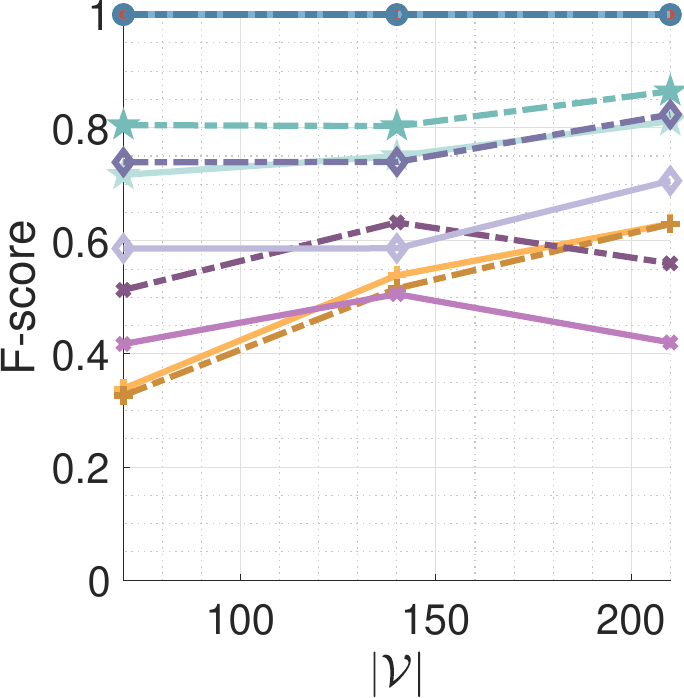}
        \begin{picture}(0,0)
           \put(-125,110){\textbf{(c)}}
        \end{picture}
        \label{fig: fscore graph size}
    \end{subfigure}
\vspace{0.2cm}

    \begin{subfigure}[b]{0.3\textwidth}
        \centering
        \includegraphics[width=0.7\textwidth]{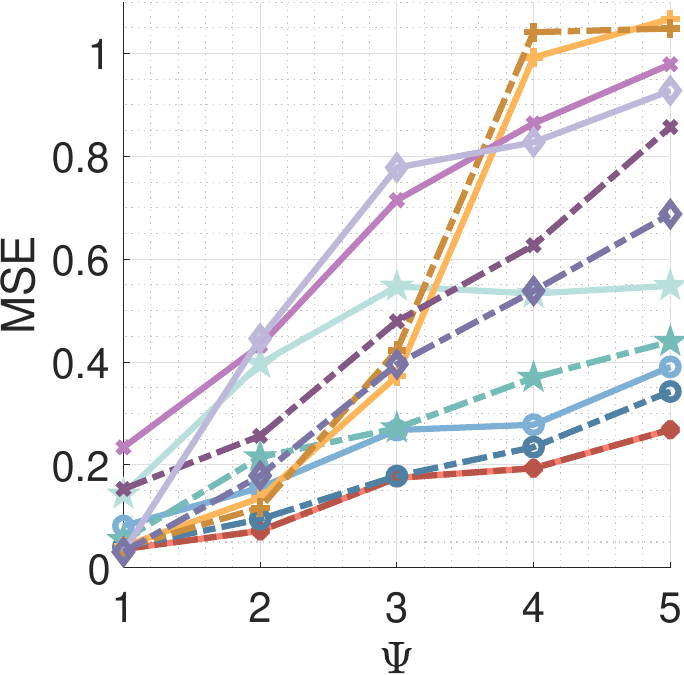}
        \begin{picture}(0,0)
           \put(-125,110){\textbf{(d)}}
        \end{picture}
        \label{fig: mse deg}
    \end{subfigure}%
    \begin{subfigure}[b]{0.3\textwidth}
        \centering
        \includegraphics[width=0.7\textwidth]{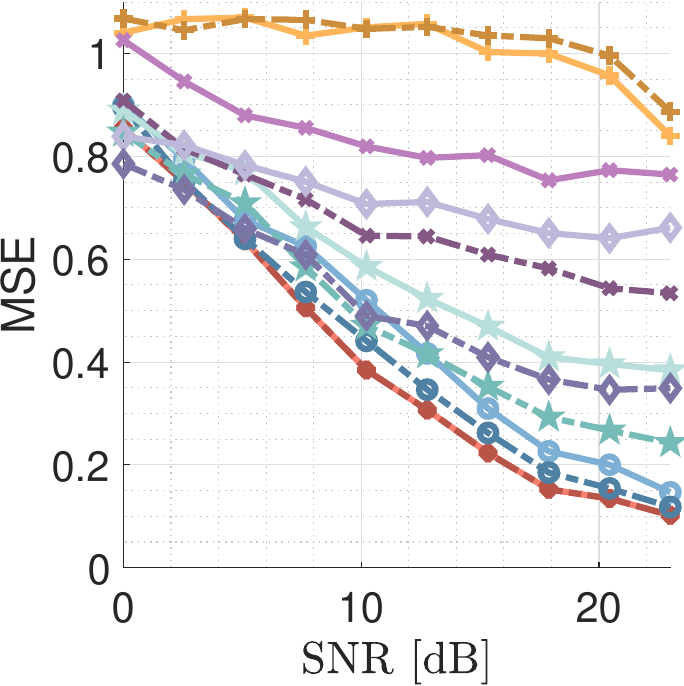}
        \begin{picture}(0,0)
           \put(-125,110){\textbf{(e)}}
        \end{picture}
        \label{fig: mse snr}
    \end{subfigure}%
    \begin{subfigure}[b]{0.3\textwidth}
        \centering
        \includegraphics[width=0.7\textwidth]{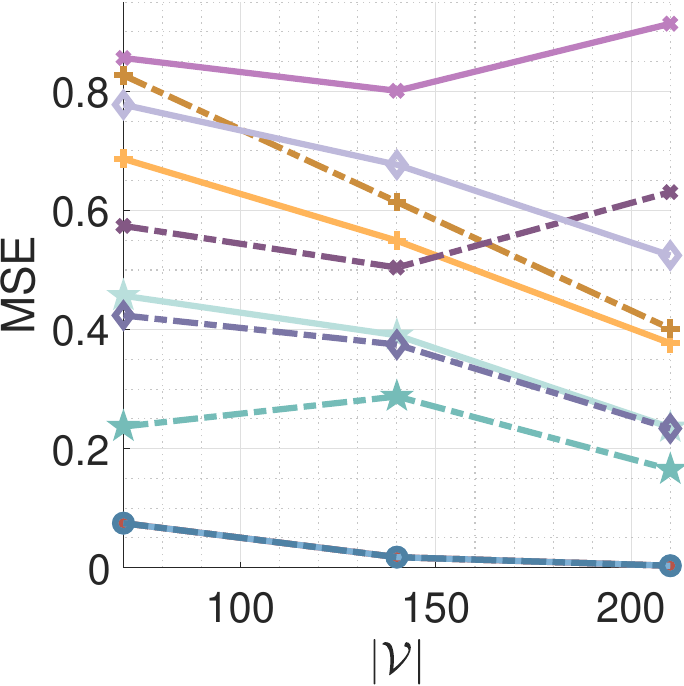}
        \begin{picture}(0,0)
            \put(-125,110){\textbf{(f)}}
        \end{picture}
        \label{fig: mse graph size}
    \end{subfigure}
\vspace{0.2cm}

    \begin{subfigure}[b]{0.3\textwidth}
        \centering
        \includegraphics[width=0.7\textwidth]{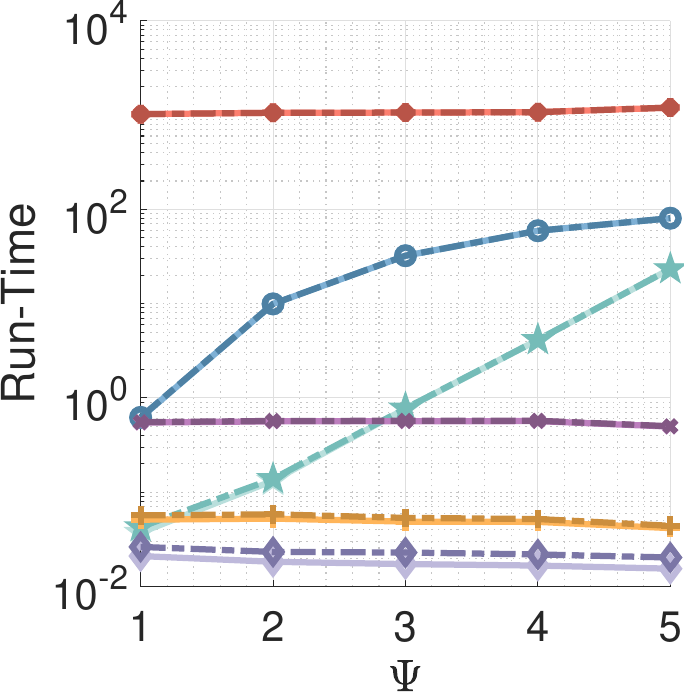}
        \begin{picture}(0,0)
           \put(-125,110){\textbf{(g)}}
        \end{picture}
        \label{fig: run deg}
    \end{subfigure}%
    \begin{subfigure}[b]{0.3\textwidth}
        \centering
        \includegraphics[width=0.7\textwidth]{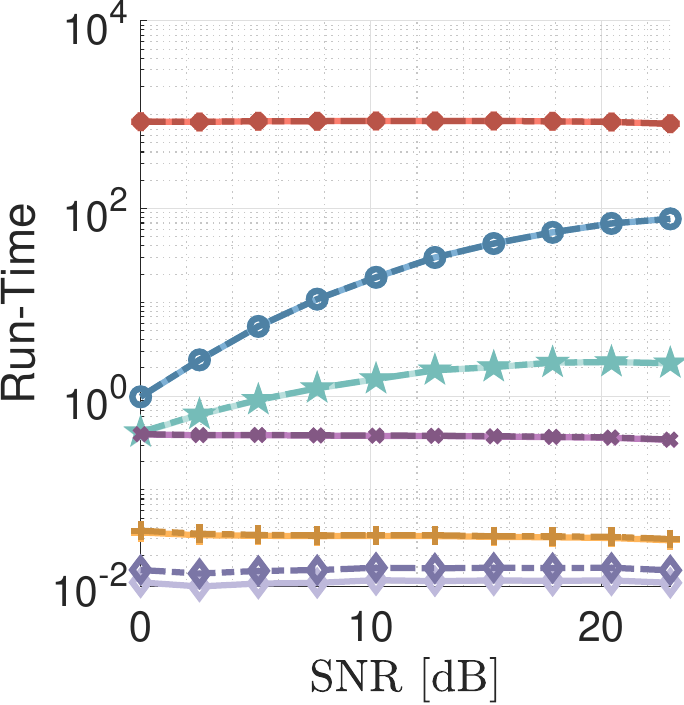}
        \begin{picture}(0,0)
            \put(-125,110){\textbf{(h)}}
        \end{picture}
        \label{fig: run snr}
    \end{subfigure}%
    \begin{subfigure}[b]{0.3\textwidth}
        \centering
        \includegraphics[width=0.7\textwidth]{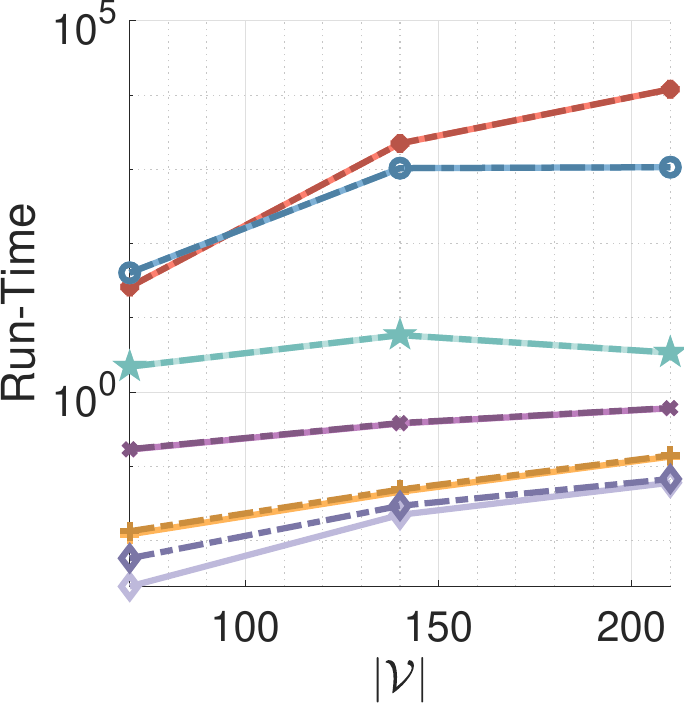}
        \begin{picture}(0,0)
           \put(-125,110){\textbf{(i)}}
        \end{picture}
        \label{fig: run graph size}
    \end{subfigure}

\vspace{0.25cm}

    \begin{minipage}{0.9\textwidth}
        \centering
        \includegraphics[width=0.8\textwidth]{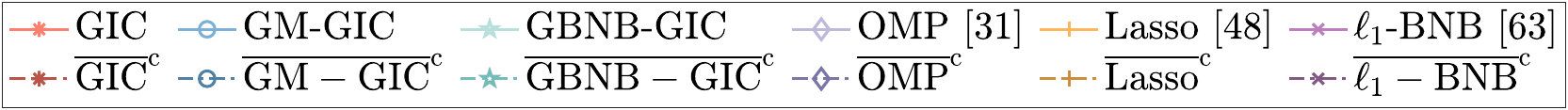} 
    \end{minipage} 
    \caption{ SBM graph model: The F-score ((a)-(c)),  MSE ((d))-(f)), and run-time ((g)-(i)) of the different methods are compared versus the graph filter degree ($\Psi$), SNR, and  graph size. } 
    \label{fig; sbm}
\end{figure*}

\subsection{Test Case A - Synthetic Data Set} \label{sec; sim; syntetic start}

In this subsection, we consider the support set recovery problem in \eqref{eq; sup}, where the sparse dictionary matrix is selected as the 
graph filter defined by \eqref{eq; graph filter}.
The \ac{gso} used to construct the graph filter is the graph Laplacian matrix \cite{shuman2013emerging}, i.e. $\Smat=\Lmat$, which is normalized such that its largest eigenvalue is $12$.
The graph filter coefficients are set to $1$, $h_{\iterone}=1$, $\iterone=0,\ldots\filterdeg$.
Additionally, the measurement vector $\measurements$ 
follows the model in \eqref{eq; model}. 
The underlying graph is modeled as an \ac{sbm} graph with \( \clusternum \) clusters, where each cluster contains \( N=70 \) nodes  \cite{lee2019review}. The probability of connection between any two nodes within the same cluster is \( \frac{6}{N} \). 
In addition, consecutive clusters are connected by $2$ link nodes from each cluster. 
Unless stated otherwise, in the following simulations the \ac{sbm} is composed of $2$ clusters and the support set elements in $\signalsupport$, where $\card{\signalsupport}=4$, are selected from one cluster
 using one of the following two cases: 
\begin{enumerate}[label=-, leftmargin=0.3cm,topsep=2.5pt] 
    \item \label{sensor; 1} 
\textbf{Scenario 1 (Mixed Dispersed-Localized Support):} In this scenario, 
  two nodes are randomly selected from the  cluster, 
    each has one of its neighbors added to the support set. This process is repeated until four different nodes are selected, which results in two groups of four distinct nodes. 
   This scenario describes a situation where the non-zero elements are clustered in two separated groups, 
   where the nodes within each group are in close proximity. 
     \item \label{sensor; 2} \textbf{Scenario 2 (Localized Support):} In this scenario, a single node is randomly selected from the cluster. 
     Then,  $\card{\signalsupport}-1$ ($3$ when $\card{\signalsupport}=4$)  of the node's neighbors are randomly chosen and added to the support set.
     This scenario describes a situation where the non-zero elements are in close proximity. 
  \end{enumerate}
The non-zero elements of the sparse graph signal, $\signal$, are drawn from a standard normal distribution. 
The filtered signal, $\filter\xvec$, is normalized to a specified \ac{snr},  which is calculated as
$\text{SNR}=\norm{\filter\signal}^2/(\card{\nodes}\sigma_n^2)$, where  the noise covariance matrix is $\Rmat=\sigma_n^2\Imat$, and 
unless otherwise specified, $\sigma_n=0.01$.

Figures \ref{fig; sbm}.(a)-(c) present the F-score as a function of: (a) the graph filter degree for SNR$=17$ [dB] and $\card{\nodes}=140$; (b) \ac{snr} for $\filterdeg=4$ and $\card{\nodes}=140$; and (c) and graph size for $\filterdeg=4$ and SNR=$20$ [dB]. 
Across all figures, 
the \ac{gmgic} and its correction $\overline{\ac{gmgic}}^{c}$ methods obtain the closest result  to the \ac{gic} benchmark,  showing the highest F-scores.  
Close behind is the $\overline{\text{\gbnb}}^{c}$ method, with
results up to $0.1$ better than the \gbnb~method in Fig. \ref{fig; sbm}.(a).  The results also show that \ac{gfoc} significantly enhances the $\ell_1$-BnB method across all experiments.

In Fig. \ref{fig; sbm}.(a), it can be seen that the F-score of all methods decreases almost always as the graph filter degree increases.
The $\overline{\ac{omp}}^{c}$ method shows similar results to the $\overline{\text{\gbnb}}^{c}$ method for the first $2$ graph degrees, 
yet lower F-scores at the higher degrees. 
The Lasso and $\ell_1$-\ac{bnb} methods and their corrections show competitive results only for the first $2$ graph degrees. 
In Fig. \ref{fig; sbm}.(b)
it can be seen that the F-score of all methods increases as 
the SNR increases. 
The corrected version of the \ac{omp} method, $\overline{\ac{omp}}^{c}$, 
improves the \ac{omp} F-score by approximately $25\%$ compared to the original \ac{omp}. 
 Similarly, the corrected version of the $\ell_1$-\ac{bnb}, $\overline{\ell_1-BnB}^{c}$ method,  improves the F-score of the $\ell_1$-\ac{bnb} by up to $20\%$.
 However, 
it remains less competitive than the $\overline{\ac{omp}}^{c}$, $\overline{\text{\gbnb}^{c}}$, and \ac{gmgic}
methods. 
The Lasso method and its corrected version
exhibit poor F-score performance across all observed \acp{snr}. 

Figure \ref{fig; sbm}.(c) displays the F-score as a function of graph size (number of \ac{sbm} clusters). For $\card{\nodes}=70$, the graph contains one cluster. When $\card{\nodes}=140$, it has two clusters, with the signal support drawn from Cluster $1$. For $\card{\nodes}=210$, the graph has three clusters, with the support either entirely from Cluster $1$ or uniformly from Clusters $1$ and $3$. In all cases, the support follows Scenario $2$, and the \ac{snr} is set to SNR=$20$ dB.
It can be seen that the graph size has a minimal impact on the F-score of the proposed methods. 
The main effect is seen with the $\ell_1$ and Lasso methods (and their corrections), which perform worse compared to the other approaches. 

Figures \ref{fig; sbm}.(d)-(f) presents the \acp{mse} of the different methods as a function of the (a) graph filter degree, (b) the \ac{snr}, and (c) the graph size, using the same settings as those used for the F-score evaluation in Figs. \ref{fig; sbm}.(a)-(c).
The \ac{mse} performance in Figs. \ref{fig; sbm}.(d)-(f) aligns closely with the F-scores results in Figs. \ref{fig; sbm}.(a)-(c), leading to similar conclusions regarding the comparative performance of the different methods and the superiority of the proposed methods.
For example, the \ac{gic}, \ac{gmgic}, and $\overline{\text{\gbnb}^{c}}$ methods, which achieve the highest F-scores in Figs. \ref{fig; sbm}.(a)-(c), also exhibit the lowest \acp{mse} in Figs. \ref{fig; sbm}.(d)-(f).
In addition, it can be seen in Fig. \ref{fig; sbm}.(e) that the \ac{snr} significantly impacts the performance of the \ac{gic}, \ac{gmgic}, and $\overline{\text{\gbnb}^{c}}$ methods, showing improvements of approximately $0.7$ across the tested range. In contrast, the \ac{omp} method improves by only $0.1$ and shows no further improvement beyond an \ac{snr} of $8$ [dB]. Similarly, $\overline{\ac{omp}}^{c}$ improves by around $0.4$ and shows no further improvement beyond an \ac{snr} of $14$ [dB].

In Figs. \ref{fig; sbm}.(g)-(i) we present the run-time of the different methods as a function of the (g) graph filter degree, (h) the \ac{snr}, (i) and the graph size, respectively,  for the same setting as used for the F-score and \ac{mse} evaluation in Figs. \ref{fig; sbm}.(a-c) and  Figs. \ref{fig; sbm}.(d)-(f).
Across all figures, the \ac{gic} benchmark method requires the highest run-time,  exceeding $1,000$ seconds per Monte-Carlo simulation when the graph size is greater than $140$ nodes. 
In contrast, the proposed \ac{gmgic} and \gbnb~methods have significantly reduced run-times while still maintaining good F-scores and \ac{mse} results, as shown in Figs. \ref{fig; sbm}.(a)-(c) and Figs. \ref{fig; sbm}.(d)-(f).
The  $\ell_1$-\ac{bnb},  Lasso, and \ac{omp} methods are the fastest overall. However, as shown in Figs. \ref{fig; sbm}.(a)-(c)  and Figs. \ref{fig; sbm}.(d)-(f),  these methods display lower F-scores  and \ac{mse}, respectively,  than the proposed methods. 
Notably, the additional run-time required by the \ac{gfoc} method to correct these methods is negligible.

In Fig. \ref{fig; sbm}.(g) it can be seen that the run-time of the \ac{gmgic} and \gbnb~methods increases as the 
graph filter degree increases, where the \ac{gmgic} method run-time is larger than the \gbnb~method run-time. 
In Fig.~\ref{fig; sbm}.(h), it can be seen that the run-time of the \ac{gmgic} method significantly increases as the \ac{snr} increases. 
The run-time of the \gbnb~method moderately increases and does not change for \ac{snr} higher than $15$ [dB]. 
Both methods still operate significantly faster than the \ac{gic} method. 
The run-time required for the remaining methods is constant \ac{wrt} the \ac{snr}
 and lower than the methods mentioned above. 
In Fig. \ref{fig; sbm}.(i) it can be seen that
the run-time required for the \ac{gic} method
grows exponentially with the graph (model) size.
 In comparison, the \ac{gmgic} run-time grows significantly between $\card{\mathcal{\nodes}}=70$ to $\card{\mathcal{\nodes}}=140$, but remains constant between $\card{\mathcal{\nodes}}=140$ to $\card{\mathcal{\nodes}}=210$.
The \gbnb~ method is only marginally affected by graph size. 
In addition, the graph size also affects the run-time of the Lasso and \ac{omp} methods, where their run-time becomes closer to those of the \ac{bnb} methods as $|\mathcal{V}|$ increases. 
To conclude, the results demonstrate that the proposed \ac{gmgic} and \ac{gbnb} scale well with the graph size and enable efficient processing even in larger network configurations.

\label{sec; sim; syntetic end}

\subsection{Test Case B - Graph Blind Deconvolution of Brain Signals} \label{sec; brain start}
Graph blind deconvolution has various real-world applications \cite{segarra2016blind,ramirez2021graph}.
In particular, it has been used for the analysis of epileptic 
seizures \cite{kramer2008emergent,mathur2020graph}  considered in this subsection.  
In this context, the graph signals are the neural activity levels in regions of interest (ROIs) within the brain, 
and the graph nodes and edges represent the brain regions and the anatomical connection between them, respectively.
This application has been explored in  \ac{gsp} blind deconvolution studies \cite{segarra2016blind,ramirez2021graph}. 

In order to apply our graph-based sparse recovery methods to the problem of blind deconvolution, we propose the following two-stage approach. In the first stage, a blind deconvolution is performed to estimate the unknown graph filter coefficients, $\hat{h}_i$, $i=0,\ldots,\filterdeg$, from the input measurements,  
$\measurements$.
This estimation is implemented by one of the existing blind deconvolution methods, such as the methods in \cite{segarra2016blind, ramirez2021graph}.
The graph filter can then be recovered based on \eqref{eq; graph filter}, where the \ac{gso} associated with the underlying graph is assumed to be known. 
Then, in the second stage, our graph-based sparse recovery methods are employed to reconstruct the sparse graph signal $\signal$ based on the model in \eqref{eq; model}, where the graph filter used is the one estimated in the previous stage.  
This overall two-stage approach is illustrated in Fig. \ref{fig; blind methodology}. 

\begin{figure}[hbt]
    \centering
    \begin{tikzpicture}[node distance=4cm, auto]
        \tikzstyle{block} = [rectangle, draw, text centered, minimum height=2em, minimum width=2cm, outer sep=2pt]

        \node [block] (deconvolution) {\shortstack{Blind \\ Deconvolution}};
        \node [block, right of=deconvolution, xshift=0cm] (recovery) {\shortstack{Graph-Based \\ Sparse Recovery}};
        \node [right of=recovery, xshift=-2cm] (output) {$\hat{\signal}$};

        \draw[->, thick] (-2,0) -- ++(0.2,0) -- (deconvolution) node[midway, above] {$\measurements$};
        \draw[->, thick] (deconvolution) -- ++(1.2,0) -- (recovery) node[midway, above] {$\{\hat{h}_k\}_{k=1}^{\filterdeg}$};
        \draw[->, thick] (recovery) -- ++(1.3,0) -- (output);
    \end{tikzpicture}
    \caption{
    Two-stage approach for blind deconvolution combined with graph-based sparse recovery: The input measurements $\measurements$ undergo a blind deconvolution process to estimate the graph filter coefficients, $\{\hat{h}\}_k$, $k=1,\ldots,\filterdeg$. Then,  a graph-based sparse recovery is performed by the proposed methods,  resulting in the estimated signal $\hat{\signal}$.}
    \label{fig; blind methodology}
\end{figure}

\begin{figure*}[t]
        \centering
        \begin{minipage}[b]{0.3\textwidth}
            \centering
            \begin{subfigure}[b]{\textwidth}
                 \includegraphics[width=0.7\textwidth]{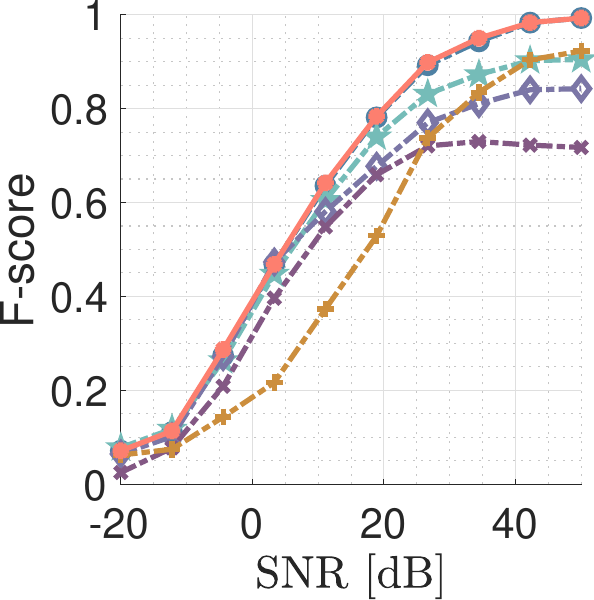}
                \caption*{\textbf{(a)} Known filter, F-score}
                \label{fig:sub11}
            \end{subfigure}
        \end{minipage}%
        \begin{minipage}[b]{0.3\textwidth}
            \centering
            \begin{subfigure}[b]{\textwidth}
               \includegraphics[width=0.7\textwidth]{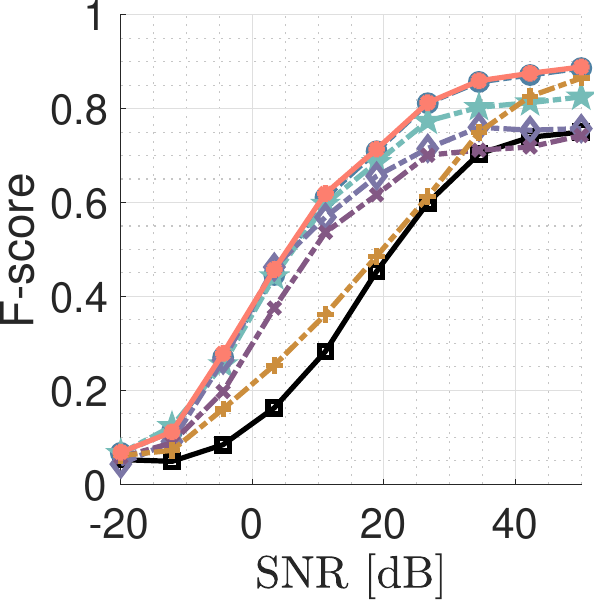}
                \caption*{\textbf{(b)} Blind Nuc. norm, F-score}
                \label{fig:sub21}
            \end{subfigure}
        \end{minipage}%
        \begin{minipage}[b]{0.3\textwidth}
            \centering
            \begin{subfigure}[b]{\textwidth}
                \includegraphics[width=0.7\textwidth]{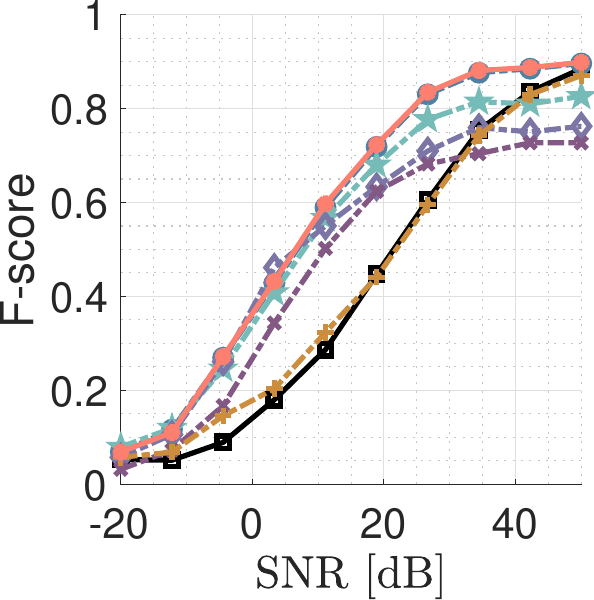}
                \caption*{\textbf{(c)} Blind Logdet, F-score}
                \label{fig:sub31}
            \end{subfigure}   
        \end{minipage}
\end{figure*}

    \begin{figure*}[t]
               \centering
        \begin{minipage}[b]{0.3\textwidth}
            \centering
            \begin{subfigure}[b]{\textwidth}
                 \includegraphics[width=0.7\textwidth]{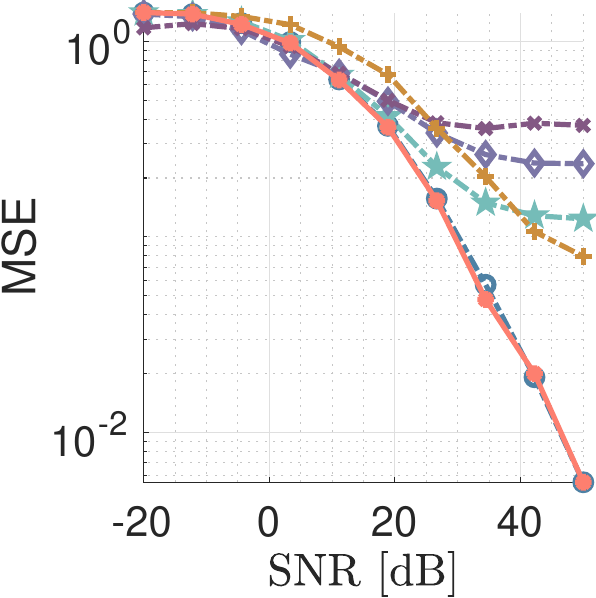}
                \caption*{\textbf{(d)} Known filter, \ac{mse}}
                \label{fig:sub12}
            \end{subfigure}
        \end{minipage}%
        \begin{minipage}[b]{0.3\textwidth}
            \centering
            \begin{subfigure}[b]{\textwidth}
               \includegraphics[width=0.7\textwidth]{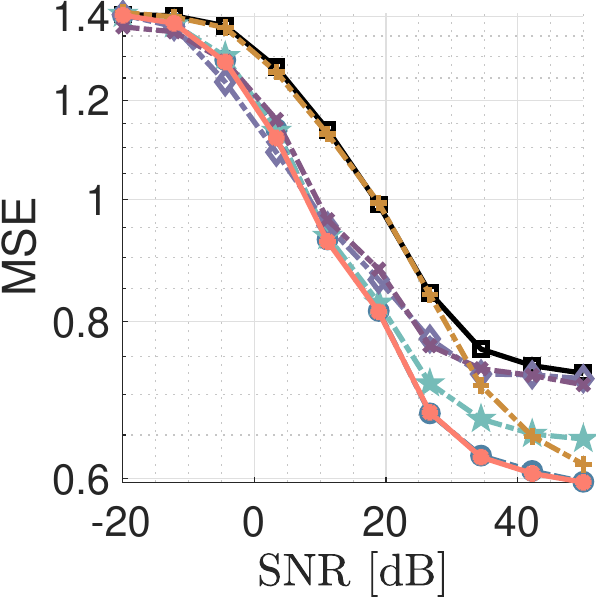}
                \caption*{\textbf{(e)} Blind Nuc. norm, \ac{mse}}
                \label{fig:sub22}
            \end{subfigure}
        \end{minipage}%
        \begin{minipage}[b]{0.3\textwidth}
            \centering
            \begin{subfigure}[b]{\textwidth}
                \includegraphics[width=0.7\textwidth]{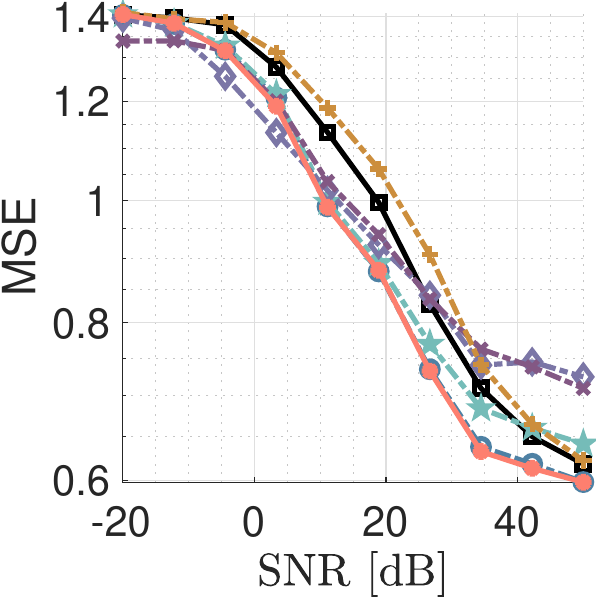}
                \caption*{\textbf{(f)} Blind Logdet, \ac{mse}}
                \label{fig:sub32}
            \end{subfigure}   
        \end{minipage}
        \vspace{0.3cm}

\begin{minipage}{0.9\textwidth}
        \centering
        \includegraphics[width=0.8\textwidth]{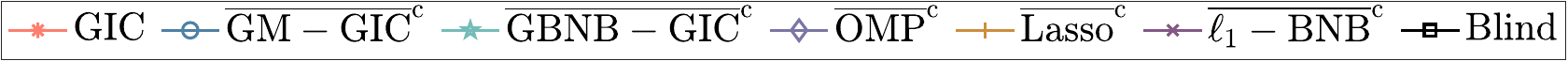}
        \label{fig:legend}
    \end{minipage}

   \caption{ Brain networks: F-score and \ac{mse} of graph-based sparse recovery methods using graph coefficients that are either known (see (a) and (d)) or estimated via the blind deconvolution techniques: Blind Nuc. norm ((b) and (e)) and Blind Logdet ((c) and (f)). } 
   \label{fig; blind}
\end{figure*}

In the following simulations, we consider the model in \eqref{eq; model}, where the graph filter is unknown 
and the underlying graph is a brain network. 
We use the Laplacian \ac{gso}, which is computed based on an adjacency matrix drawn from one of the six brain networks provided in 
\cite{iglesias2018demixing,iglesias_blind_separation}. 
 Entries in each row of the Laplacian that are significantly smaller (i.e. two orders of magnitude smaller than the maximum entry in the same row) are set to zero, and the Laplacian is subsequently normalized such that its largest eigenvalue is $12$ to establish a controlled setting for the analysis.

The graph filter is constructed by setting the graph filter degree to $\filterdeg = 3$, substituting the Laplacian matrix 
and the graph filter coefficients (which are uniformly drawn from 
the interval $[0,1]$) into \eqref{eq; graph filter}, and normalizing the result to achieve a Frobenius norm of $1$. 
The signal support is set to $\signalsupport = 4$ and is drawn according to Scenario $2$ from Subsection \ref{sec; sim; syntetic start}. 
The non-zero values of the graph signals are uniformly drawn from the interval $[0,1]$ and normalized to a unit-norm vector. 
This setup follows a similar framework to that considered in \cite{segarra2016blind}.

For the first stage of graph filter estimation, we use the following blind deconvolution techniques:
\begin{itemize}
    \item \textbf{Blind Nuc. norm:} Employs nuclear norm minimization combined with  $\ell_1$ row sparsity regularization
    (see Equation (9) in \cite{segarra2016blind}).
    \item \textbf{Blind Logdet:} Employs a log surrogate combined with $\ell_2$ row sparsity regularization (see Equation (17) with $\tau_x=0.5$ and $\tau_h=0$ in \cite{ramirez2021graph}). 
\end{itemize}
The performance of each scenario is evaluated through at least $300$ Monte Carlo simulations.

We first evaluate the existence of Assumption \ref{A; graph degree}, which assumes a low maximal degree for the graph, 
in the context of the real-data brain networks. This assumption inherently indicates that the graph is sparse. 
  Table \ref{table; brain} presents 
the ratio of the number of edges to the number of possible connections, $\frac{\card{\edges}}{\frac{1}{2}\card{\nodes}(\card{\nodes}-1)}$, and the maximal degree, $d_{\text{max}}$, 
for the six real-data brain networks derived from the dataset in \cite{hagmann2008mapping}.
as described above. 
This table demonstrates that these brain networks are sparse graphs with low maximal degrees, thus aligning with Assumption \ref{A; graph degree}.
\begin{table}[hbt]

\centering
\begin{tabular}{c| c c c c c c}
\hline
& $\mathcal{G}_1$ & $\mathcal{G}_2$ & $\mathcal{G}_3$ & $\mathcal{G}_4$ & $\mathcal{G}_5$ & $\mathcal{G}_6$ \\ 
\hline
$\frac{\card{\text{edges}}}{\frac{1}{2}\card{\nodes}(\card{\nodes}-1)}$   & 0.1 & 0.11 & 0.09 & 0.11 & 0.11 & 0.11  \\ 
$d_{\text{max}}$       & 3.28    & 4.41    & 5.83    & 3.87    & 3.93    & 3.68    \\ 
\bottomrule
\end{tabular}
\caption{ Brain networks: the ratio of the number of edges and the number of possible connections, $\frac{\card{\text{edges}}}{\card{K_n}}$, and the maximal degree, $d_{\text{max}}$, 
for the real-data brain networks from \cite{hagmann2008mapping,iglesias_blind_separation}.
}
\label{table; brain}
\end{table}

Next, we evaluate the graph-based methods from  Subsection \ref{sec; sim; methods and measures}
for three scenarios: 1) using a known graph filter; and 2) and 3) using the output of the two-stage blind deconvolution approach with the two different techniques, referred to as ``Blind Nuc. norm" and ``Blind Logdet", respectively.
We compare the result of the different methods with those of the existing blind deconvolution methods, which can be interpreted as performing only Stage 1 of the approach described in Fig. \ref{fig; blind methodology}. 
These methods are referred to as ``Blind" in Fig. \ref{fig; blind}. 
In particular, in Figs. \ref{fig; blind methodology}.(b) and \ref{fig; blind methodology}.(e), the Blind Nuc. norm method is used, while in Figs. \ref{fig; blind methodology}.(c) and \ref{fig; blind methodology}.(f) the BLind Logdet method is employed.
For the clarity of presentation, we only present the results for the graph-based methods after applying the \ac{gfoc} method. This is because the \ac{gfoc} method consistently improves the recovery performance with negligible additional computational overhead (run-time).

Figure \ref{fig; blind} presents the F-score and \ac{mse} performance of the different methods as a function of the \ac{snr}. 
The \ac{mse} 
is computed as
$
MSE = \frac{1}{\norm{\signal\hvec^T }}\norm{\signal\hvec^T - \hat{\signal}\hat{\hvec}^T}_F,
$
where $\signal$ is the sparse graph signal, $\hat{\signal}$ is its estimator, $\hvec$ represents the vector of graph filter coefficients, and $\hat{\hvec}$ is its estimator. 
For a known graph filter, 
we use $\hat{\hvec}=\hvec$.

Figures \ref{fig; blind}.(b)-(c) and \ref{fig; blind}.(e)-(f) show that the \ac{gic}, $\overline{\ac{gmgic}}^{c}$, $\overline{\text{\gbnb}}^{c}$, and $\overline{\ac{omp}}^{c}$ methods significantly outperform the blind deconvolution techniques in both F-score and \ac{mse} under the case where the graph filter is unknown. 
These results confirm that the proposed sparse recovery methods are also applicable in scenarios where the graph filter is not known {\em{a priori}}.
 In addition, it can be seen in these figures that while $\overline{\ac{omp}}^{c}$ achieves competitive results at low \ac{snr} levels, its performance degrades as the \ac{snr} increases, 
 and is worse than those of the $\overline{\ac{gmgic}}^{c}$ and $\overline{\text{\gbnb}}^{c}$ methods.
The $\overline{\text{Lasso}}^{c}$, is competitive to the above methods only at high \ac{snr} values, 
and the $\overline{\ell_1\text{-}BnB}^{c}$ method works poorly for all \ac{snr} rates and even downgrades 
the blind deconvolution output. 
Notably, the F-scores of the graph-based sparse recovery methods, using the two-stage approach from Fig. \ref{fig; blind methodology}, are comparable to those obtained where the graph filter is known, as can be seen by comparing the results in Figs. \ref{fig; blind}(a) with the results in Figs. \ref{fig; blind}. (b)-(c). 
In contrast, the \ac{mse} is more sensitive to the accuracy of the graph filter coefficients, indicating that improving the filter estimation could lead to better \ac{mse} results.
It is also important to emphasize that the additional run-time for the $\overline{\text{\gbnb}}^{c}$, $\overline{\ac{omp}}^{c}$, $\overline{\ell_1\text{-}BnB}^{c}$, and $\overline{\text{Lasso}}^{c}$ methods have been found to be negligible compared to the computational cost of the first stage of the blind deconvolution.
Thus, it is recommended that a subsequent graph sparse recovery stage be employed on the existing graph blind deconvolution methods.

\label{sec; brain end}
\subsection{Test Case C - Cholera Outbreak} \label{sec; cholera start}
In this section, we evaluate the performance of the different methods in a real-world test case where both the graph (network) and the graph filter are unknown. While the observation model in \eqref{eq; model} forms the basis for our approach and  underpins the proposed sparse recovery methods,  it is important to note that in this scenario, the true structure of the real-world data and the underlying model are unknown. 
Thus, the simulations in this subsection demonstrate how the proposed framework
can be applied to 
the practical application of interpreting 
real-world data,  
despite the uncertainty in the true graph structure and graph filter that govern the system dynamics.

The real-world data for this case study, drawn from \cite{wilson2012cholera}, 
involves the 1854 Cholera outbreak in London's Soho district,
 where Dr. John Snow identified the source of the epidemic as a contaminated water pump \cite{snow1854report}. 
 Inspired by the framework in \cite{pena2016source}, this study applies the proposed graph-based sparse recovery methods to identify the source of the outbreak using historical data. The used dataset  includes information on death tolls across 250 buildings, their  geographic locations, and the positions of 8 water pumps, as provided in \cite{wilson2012cholera}.

First, we model the underlying graph using a type-aware nearest neighbor structure. The graph is constructed based on the hypothesis that water was the primary  transmission vector for cholera, with nodes representing buildings and pumps. For the graph edges, there is an edge between each pump and its two nearest pumps, each pump and its five nearest buildings, and each building and its two nearest buildings. 
The edge weights are computed using a Gaussian kernel:
$$
W_{\none,\ntwo } = \exp\left(-\frac{(\text{dist}(\none, \ntwo))^2}{2 \cdot (f(\none,\ntwo))^2}\right),
$$
where dist represents the Euclidean distance, and  $f(i,j)$ depends on the node types: \( f(\none,\ntwo) = 1000 \) for pump connections, \( f(\none,\ntwo) = 80 \) for pump-building connections, and \( f(\none,\ntwo) = 20 \) for buildings connections.
 For ground truth, we consider the infected pump and the building with the highest death toll. To promote sparsity, we reduce the death count by one in each building, resulting in $136$ non-zero elements out of $258$, and we normalize the data vector to $1$. 
It should be noted that the underlying graph and the graph filter are set to  satisfy Assumptions \ref{A; graph degree}-\ref{A; filter degree}. 
Using the reconstructed graph, we apply the methodology outlined in Fig. \ref{fig; blind methodology}, which includes a blind deconvolution step followed by graph-based sparse recovery. This follows the approach detailed in Subsection \ref{sec; brain start}, with the estimated graph. Accordingly, we use the same methods as those described in Subsection \ref{sec; brain start}, but with the estimated graph structure.
The performance is evaluated using at least $60$ Monte Carlo simulations.

\begin{figure}[t]
\centering
  \includegraphics[width=5.5cm]{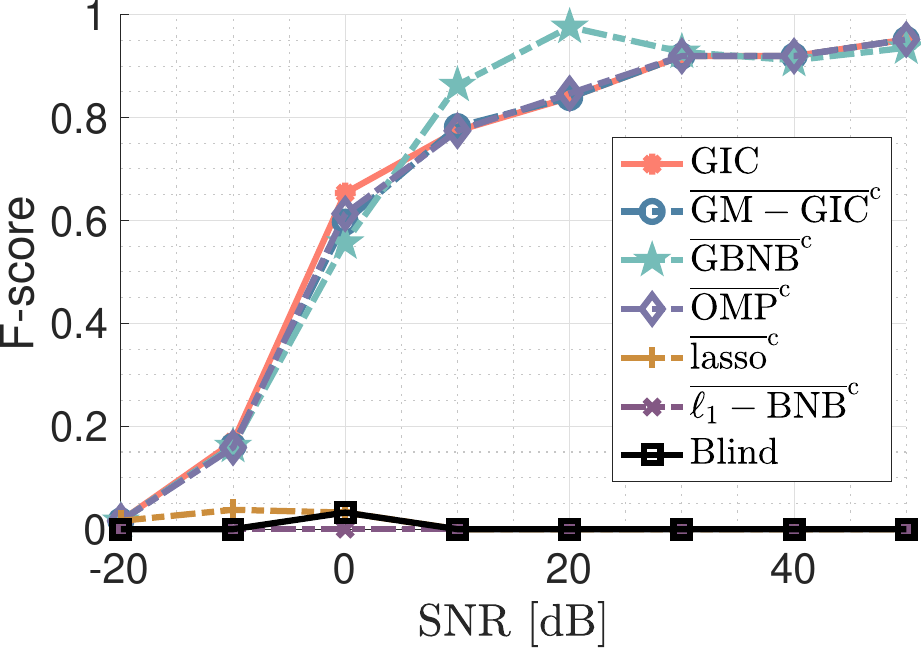}
\caption{Cholera network: the F-score of graph-based sparse recovery methods using coefficients estimated via the Bilnd Nuc. norm method.
} 
\label{fig; snow results} 
\end{figure} 

Figure \ref{fig; snow results} shows the F-scores of the graph-based sparse recovery methods: \ac{gic}, $\overline{\ac{gmgic}}^{c}$, $\overline{\text{\gbnb}}^{c}$, $\overline{\ac{omp}}^{c}$, $\overline{\ell_1\text{-}BnB}^{c}$, and $\overline{\text{Lasso}}^{c}$, as a function of \ac{snr}.
In each simulation, we added a zero-mean Gaussian additive noise to the data vector to obtain the specific \ac{snr}.
This evaluation focuses on the challenge of identifying the ground truth provided by the data vector contaminated with noise 
under different \acp{snr}.
The results indicate that the 
blind deconvolution method Blind Nuc. norm  \cite{segarra2016blind} cannot reconstruct the support set at any \ac{snr} level in this case.
In contrast, 
the \ac{gic}, $\overline{\ac{gmgic}}^{c}$, $\overline{\text{\gbnb}}^{c}$, and $\overline{\ac{omp}}^{c}$ methods have a significantly higher F-score and achieve close to perfect reconstruction (F-score$>0.9$) for large \ac{snr}.
These methods achieve similar F-scores for most \acp{snr}, where the F-score increases as the \ac{snr} increases. 
Between $15$ [dB] and $25$ [dB], $\overline{\text{\gbnb}}^{c}$ achieves slightly higher results. 
Thus, the results in Fig. \ref{fig; snow results} demonstrate that the proposed graph-based sparse recovery methods offer substantial improvements in support set recovery, even in challenging real-world scenarios with unknown graph structures and filters, outperforming the existing blind deconvolution techniques.
Notably, the run-time added by the graph-based methods, excluding the \ac{gic} method, 
is negligible in comparison to the run-time of the blind deconvolution method.


\label{sec; cholera end}
\section{Conclusions}  \label{sec; conclusions}

In this paper, we propose three sparse recovery techniques: the \ac{gmgic}, \ac{gbnb}, and \ac{gfoc} methods. 
These techniques solve the support set recovery problem for diffused sparse graph signals,  which are node-domain sparse graph signals diffused over the graph by a graph filter.
The proposed methods use the \ac{gic} as the optimization cost function and leverage the underlying graphical structure to reduce the computational complexity and enhance the recovery performance. In particular, the \ac{gmgic} and \ac{gbnb} directly tackle the support set recovery problem by considering local graph properties and efficiently partitioning the dictionary elements 
and iteratively searching over candidate support sets, respectively.
The \ac{gfoc} method is based on correcting the solution obtained by any other support set recovery technique.
Furthermore, we provide a theoretical analysis of the considered setting with: 1) an examination of the dictionary matrix atoms (i.e. graph filter columns) derived from the underlying graphical structure; and 2) an evaluation of the computational complexity of the proposed methods, based on
the graph filter degree, the maximum degree of the graph, and sparsity restrictions.

Simulations conducted over the  \ac{sbm} graphs  reveal that the proposed \ac{gmgic} and \ac{gbnb} methods achieve higher support set recovery  and signal recovery accuracy  compared to state-of-the-art methods such as \ac{omp}, Lasso, and $\ell_1$-\ac{bnb}, while maintaining significantly lower computational overhead than the optimal \ac{gic} method.
In general, the \ac{gmgic} method outperforms the \ac{gbnb} method in terms of the accuracy of the support set recovery, but requires more computing time. 
It is also demonstrated that
the \ac{gfoc} method can significantly improve the support set recovery accuracy of existing methods, including the \ac{omp}, Lasso, and $\ell_1$-\ac{bnb} methods, without imposing significant additional computation overhead. 
By applying the \ac{gfoc} method on the \ac{gbnb} method, we obtained support set estimation results comparable to the \ac{gic} benchmark, with a run-time comparable to the \ac{omp} and Lasso methods.
Consequently, the best performance is obtained by combining the \ac{gbnb} and \ac{gfoc} methods. 
In addition, we show the application of these methods for the problem of blind deconvolution in which the graph signal and the graph filter  are unknown. \label{pp; conclusion blind 1}
Simulations conducted on brain networks and real data on the Cholera $1854$ outbreak show that applying the proposed methods significantly improves the signal recovery performance for the application of blind deconvolution. \label{pp; conclusion blind 2}


\bibliographystyle{IEEEtran}
\bibliography{c_auxiliary/refs}


\end{document}